\keywords{Locality, Many-Valued Models, Non-classical model theory, Residuated lattices}
\let\bigland=\bigwedge
\let\biglor=\bigvee
\let\isom=\cong
\newcommand{\Luk}{\text{\textsl{\L}}}
\newcommand{\und}{\mathord{\reflectbox{/}}}
\let\phi=\varphi
\newcommand{\CM}{\mathcal{M}}
\newcommand{\CL}{\mathcal{L}}
\newcommand{\CP}{\mathcal{P}}
\begin{document}

\title[Residuated Lattice Locality]{Locality in Residuated Lattice Structures}

\author[J.~Carr\lmcsorcid{0000-0002-2746-2446}]{James Carr}

\address{University of Queensland}
\email{james.carr.47012@gmail.com}  

\begin{abstract}
\noindent Many-valued models generalise the structures from classical model theory by defining truth values for a model with an arbitrary algebra. Just as algebraic varieties provide semantics for many non-classical propositional logics, models defined over algebras in a variety provide the semantics for the corresponding non-classical predicate logics. In particular, models defined over varieties of residuated lattices represent the model theory for first-order substructural logics.

In this paper we study the extent to which the classical locality theorems from Hanf and Gaifman hold true in the residuated lattice setting. We demonstrate that the answer is sensitive both to how locality is understood in the generalised context and the behaviour of the truth-defining algebra. In the case of Hanf's theorem, we will show that the theorem fails for the natural understanding of local neighbourhoods, but is recoverable with an alternative understanding  for well-connected residuated lattices. For Gaifman's theorem, rather than consider Gaifman normal forms directly we focus on the main lemma of the theorem from textbook proofs - that models which satisfy the same basic local sentences are elementarily equivalent. We prove that for a number of different understandings of locality, provided the algebra is well-behaved enough to express locality in its syntax, this main lemma can be recovered. In each case we will see the importance of an order-interpreting connective which creates a link between the modelling relation for models and formulas and the valuation function from formulas into the algebra. This link enables a syntactic encoding of back-and-forth systems providing the main technical ingredient to proofs of the main locality results. 
\end{abstract}

\maketitle


\section{Introduction}\label{sec:Intro}
An important limitation of classical first-order logic is its \textit{locality}. Whether a first-order formula is modelled in a given structure depends only on the behaviour of the model at a `local' level, with the consequence that first-order logic cannot express `global' properties. This is at its most transparent in graphs where the natural notion of distance between points makes precise the idea of `local'; locality results provide an effective method to show some natural graph concepts such as connectivity or acyclicity are inexpressible in logical terms \cite[Section 3.6]{Libkin04:Elementsfinmod}. In arbitrary signatures locality is made precise through the \textit{Gaifman graph} of a model, allowing the natural notion of distance in a graph to define one for any structure \cite[Section 1.4]{EbbinghausFlum95:Finmodeltheory}. Once distance is understood, a multitude of locality results can be expressed, the two most significant being \textit{Hanf locality} \cite{Hanf65:Modeltheory} and \textit{Gaifman locality} \cite{Gaifman81:Local}. Hanf locality links the equivalence of sentences up to a given quantifier rank to the number of isomorphic local substructures; Gaifman locality states that any first-order formula is equivalent to one restricted to `local quantification'. Locality results are especially interesting in the context of finite model theory due to the topic of database queries for which locality results provide an effective method of proving a given query is not first-order definable \cite{Libkin04:Elementsfinmod}. 

The utility of locality results in classical finite model theory motivates understanding how similar theory could be developed in other model theoretic contexts. Previous work in this vein comes from Bizi\`{e}re, Gr\"{a}del and Naaf \cite{BiziereGradelNaaf23:Localitysemiring} who investigated locality for finite models defined over ordered semirings, and from Wild, Schr\"{o}der, Pattinson and K\"{o}nig \cite{WildSchroderPattinsonKonig18:fuzzyvanBenthem} who remarked on locality for the first-order fragment corresponding to a fuzzy modal logic. Here we aim to do the same for finite models defined over \textit{residuated lattices}. Varieties of residuated lattices provide the algebraic semantics for a multitude of many-valued logics encompassing many of the most well-studied \textit{substructural logics} \cite{GalatosJipsenKowalskiOno07:Resilattice}. Specific varieties have been the setting of much work regarding non-classical model theory, including the development of back-and-forth criteria \cite{DellundeGarciaNoguera18:Fuzzybackandforth}, the status of amalgamation theorems \cite{BadiaCostaDellunde19:Synccharfuzzy} or 0-1 laws \cite{BadiaCaicedoNoguera25:01law}. Notably, neither the semiring or residuated lattice setting subsumes the other. Semiring semantics takes standard classical syntax and generalises the behaviour of the conjunction and disjunction connective but lacks a primitive notion of implication or negation. The result is that our investigation develops quite differently from that of Bizi\`{e}re et al. 

Our choice to work with residuated lattices is far from arbitrary, rather it reflects a common pattern of research in non-classical logic reflected in our investigation. Namely, we identify a topic from classical model theory, adjust the core concepts for the generalised setting and attempt to recover a proof in as weak a setting as possible. As problems arise the behaviour of models is strengthened, adopting the main features of classical logic used in the original proof. Ideally once a theorem is recovered those features are shown to be necessary. In this sense to study non-classical model theory is also to study classical model theory identifying the exact critical features of classical models that underlie a particular theorem. When it comes to locality, an ongoing theme will be the importance of the presence of an order-defining connective, that is an algebraic term $\tau(x,y)$ that takes a `true' value exactly when $x$ is less than or equal to $y$ in the lattice order. In residuated lattices this is a critical property of the left and right residuums of the strong conjunction, or in the commutative case the implication connective $\rightarrow$. The presence of this connective creates a link between the usual modelling relation $\models$ and the valuation function $||-||$ which constitute the two central relationships between syntax and semantics, and from this link the more substantive syntax-semantic relationships of isomorphism types and locality theorems are then recovered. 

Exploring these questions in the residuated lattice setting the initial question is how to give generalisations of the concepts needed to express locality. As is not uncommon in many-valued model theory, there are multiple sensible generalisations united in some basic properties and relationships. In the case of locality, what emerges is that those basic properties and relationships are sufficient to recover much of the theory, with the proof work itself following through as it does classically. This determines the shape of our investigation. In Section~\ref{sec:Prelim} we properly introduce many-valued models as well as the algebras they are defined relative to. We then briefly detail some of the critical model theory for these models that will be necessary for the main investigation, including highlighting the particularly important subclasses of standard expansions and witnessed models. Our main investigation is split into two sections, Section~\ref{sec:Hanf} focusing on Hanf locality and Section~\ref{sec:Gaif} on Gaifman locality. In each case we establish a variant of the locality results for models defined over a residuated lattice with some additional constraints. We then compare and contrast our work with residuated lattices with the investigation for semirings in Section~\ref{sec:semirings}. Finally, in Section~\ref{sec:queries} we briefly see how our locality results can be used in similar manner to the classical results to provide a method of showing queries for the considered models are not definable, before concluding with some remarks about further study. 

As in the classical setting, locality is investigated only for relational languages, possibly with object constants \cite{EbbinghausFlum95:Finmodeltheory,Libkin04:Elementsfinmod}. Furthermore, as is common in the many-valued setting, we restrict to models defined over a fixed algebra (see also \cite{BadiaCostaDellunde19:Synccharfuzzy,DellundeVidal19:ppform}).


\section{Preliminaries}\label{sec:Prelim}
We start with a brief introduction to our objects of study - many-valued models. Although we will focus solely on finite models in our main investigation we introduce our models in a general manner. Many-valued models consist of two components, a non-empty set equipped with interpretations of non-logical symbols akin to familiar classical models, and an algebra that is used to define truth. Naturally, the properties of structures are tied to the choice of algebras they are defined over. Different motivations lead one to be concerned with different classes of models. The form of our investigation is an attempt to understand for which truth-defining algebras some interesting model theory can be recovered - here the theory in question is locality. We choose to work at a level of generality that can be applied cleanly to a wide range of more specific motivated settings whilst also making transparent the essential properties the locality results are sensitive to. 

\begin{defi}
    A \textit{residuated lattice} is an algebra $A$ in signature $\CL=\langle \land,\lor,\cdot,\und,/, 1\rangle$ such that:\begin{itemize}[nosep]
        \item $\langle A,\land,\lor\rangle$ is a lattice.
        \item $\langle A,\cdot,1\rangle$ is a monoid
        \item The residuation properties hold for $\langle A,\cdot,\und,/\rangle$; for all $a,b,c\in A$\[a\leq c/b\text{ iff }a\cdot b\leq c\text{ iff }b\leq a\und c.\]
    \end{itemize}
    A residuated lattice $A$ is \textit{well-connected} iff for all $a,b\in A\ a\lor b\geq 1$ iff $a\geq 1$ or $b\geq 1$. It is \emph{non-trivial} iff it contains atleast two elements.  
\end{defi}

Residuated lattices provide an interesting setting for many-valued model theory as they provide the algebraic semantics for \textit{substructural logics}, a well established class of propositional logics containing most of the well-studied particular logics in the literature, including Boolean algebras, Heyting algebras, MV-algebras and lattice ordered groups \cite[Section 3.4]{GalatosJipsenKowalskiOno07:Resilattice,CintulaHajekNoguera11:Handbookvol1}. The models defined over these algebras provide the semantics to the first-order predicate logics extending their respective propositional logic \cite[Chapter 7]{CintulaNoguera21:LogicImplication}. The consideration of well-connected lattices will be essential to have the appropriate behaviour of the existential quantifier and is partially justified due to all (finitely) subdirectly irreducible (F)SI residuated lattices being well-connected \cite[Lemma 5.29]{GalatosJipsenKowalskiOno07:Resilattice}. This gives the logical viewpoint of our investigation, to what extent does the locality of classical logic extend to substructural logics. 

We can immediately highlight the basic properties of well-connected residuated lattices that motivate their use for the algebras in our models. Firstly is that the residual operators interpret the algebraic order; \[1\leq b/a\text{ iff }a\leq b\text{ iff }1\leq a\und b.\] Secondly is that our lattice connectives are stable around $\geq 1$; \begin{align*}
    a\land b\geq 1 \text{ iff } a\geq 1 \text{ and }b\geq 1 &&
    a\lor b\geq 1 \text{ iff }a\geq 1 \text{ or }b\geq 1.
\end{align*}

During our investigation we will identify two subclasses of well-connected residuated lattices important to locality. The first are those which are bounded. We say that a (well-connected) residuated lattice $A$ is \emph{bounded below} iff there is an element $\bot$ such that $\forall a\in A\ \bot\leq a$, and \emph{bounded above} iff there is an element $\top$ such that $\forall a\in A\ a\leq \top$. We say that $A$ is \emph{bounded} iff it is both bounded below and bounded above. A basic property of residuated lattices is that bounded below implies bounded, as $\bot$ is an annihilator for $\cdot$ and consequently $A$ is bounded above by $\bot\und\bot=\bot/\bot$ \cite[Section 2.2]{GalatosJipsenKowalskiOno07:Resilattice}. 

\begin{exa}\label{exa:wellconnectboundresi}
Any residuated lattice defined on the unit interval is well-connected and bounded. In the literature on many-valued logic these are commonly called \emph{standard algebras}. Particular examples include \textit{the} standard algebras for the three fuzzy logics; \L ukasiewicz $\Luk$, G\"{o}del $\mathrm{G}$ and product $\Pi$. The monoid operators $\cdot$ of these algebras are all commutative and so their left and right residuums coincide and denoted by the single connective $\rightarrow$. The two connectives are defined as follows; the definition of $\rightarrow$ given is when $x>y$ with $x\rightarrow y=1$ otherwise.\begin{align*}
    x&\cdot_{\Luk}y = max\{0,x+y-1\} & x&\cdot_{\mathrm{G}}y = min\{x,y\} & x&\cdot_{\Pi}y = xy\\
    x&\rightarrow_{\Luk}y = min\{1,1-x+y\} & x&\rightarrow_{\mathrm{G}}y = y & x&\rightarrow_{\Pi}y = y/x 
\end{align*}

All the above lattices are \emph{integral}, i.e. the unit $1$ of the monoid operator $\cdot$ is the top element in the lattice order. Integrality is not a requirement for our investigation, an example of a non-integral standard algebra is the standard algebra based on the involutive left-continuous conjunctive uninorm $\ast$ defined by \cite[Section 2.1]{CintulaHajekNoguera11:Handbookvol1}:\[x\ast y=\begin{cases}
    min\{x,y\} & \text{if }y\leq 1-x,\\
    max\{y,x\} & \text{ otherwise}.
\end{cases}\]
 
Whilst each of the preceding examples have a linear underlying lattice order we we can give non-linear examples. Indeed, each of the preceding examples are (finitely) subdirectly irreducible, and any such residuated lattice is well-connected~\cite[Section 5.3]{GalatosJipsenKowalskiOno07:Resilattice}. In particular, FSI Heyting algebras provide a ready class of bounded well-connected residuated lattices with a (potentially) non-linear lattice order.  
\end{exa}

The second subclass does restrict to algebras whose underlying lattice order is linear and moreover have a special element just below the unit element. For a residuated lattice $A$, a \emph{co-atom} is an element $c\in A\setminus{1}$ such that for all $a\in A\ a\leq c$ iff $1\not\leq a$.\footnote{The name is taken from the integral case where $c$ is a co-atom in the familiar sense.} A residuated \emph{chain} is a residuated lattice whose lattice order is linear. 

\begin{exa}\label{exa:resichaincoatom}
    The most natural examples of residuated chains with co-atoms are those defined on finite chains relative to the fuzzy logics $\Luk_n$ and $\mathrm{G}_n$, that is the $n$-element MV-chains and $n$-element G\"{o}del-chains \cite[Example 2.4.2, Definition 1.3.1 ]{CintulaHajekNoguera11:Handbookvol2}. Explicitly, these are defined on the $n$-element set $\{k/n\in [0,1]:0\leq k\leq n\}$ with $\cdot$ and $\rightarrow$ defined as in $\Luk$ and $\mathrm{G}$ in the previous example.
    
    In this case we do not require the algebra to be bounded. Natural examples of non-bounded residuated chains with co-atoms are finite Wajsberg hoops, that is commutative integral residuated lattices that also satisfy \emph{Tanaka's equation} and \emph{divisibility}~\cite{Ugolini34:Wajsberghoops}.
\end{exa}

With an overview of the algebraic part of our models, let us now introduce our basic syntax and semantics. 
\begin{defi}
    A \textit{predicate language} $\CP$ is a pair $\langle F,R\rangle$ where $F$ is a set of \textit{function symbols} and $R$ a set of \textit{relation symbols} (disjoint from $F$) all of which come with an assigned natural number - its arity. The relation symbols of arity zero are called \textit{truth constants} and function symbols of arity zero are called \textit{object constants}. 

    Let $\CL$ be a propositional/algebraic signature $\CL$ and $\mathrm{Var}$ a countable set of variables. We define the sets of $\CL,\CP$\textit{-terms}, $\CL,\CP$\textit{-atomic formulas} $\CL,\CP$\textit{-formulas} inductively just as in classical logic except we naturally insist that the formulas are closed under all connectives from $\CL$. We denote the set of $\CP$-formulas by $\mathrm{Fm}(\CP)$. We similarly define notions such as free occurrence of a variable, open formula, substitutability and sentence as in classical model theory. 

    We always assume that we have the equality symbol $=$ in our predicate language. 
\end{defi}
\begin{rem}
    As the algebraic signature is usually clear from context we frequently omit reference to it and talk of $\CP$-terms, formulas, sentences etc. Indeed, throughout our main investigation our algebraic signature is that of residuated lattices $\CL=\{\land,\lor,\cdot,\und,/,1\}$.
\end{rem}

\begin{defi}\label{def:structures}
    Let $\CP$ be a predicate language. A $\CP$\textit{-structure} is a pair $\langle A,\CM\rangle$ where $A$ is a (non-trivial) residuated lattice and $\CM$ is a triple $\langle M,\langle F^M\rangle_{F\in\CP},\langle R^M\rangle_{R\in\CP}\rangle$ where:\begin{itemize}
        \item $M$ is a non-empty set.
        \item $F^M\colon M^n\rightarrow M$ for each $n$-ary function symbol $F\in\CP$ ($F^M\in M$ when $n=0$). 
        \item $R^M\colon M^n\rightarrow A$ for each $n$-ary relation symbol $R\in\CP$ ($R^M\in A$ when $n=0$).
    \end{itemize} 
    The set $M$ is called the \textit{domain} of $\CM$ and the mappings $F^M,R^M$ the interpretations of function and predicate symbols in $\CM$.
\end{defi}
\begin{rem}
    The insistence on $A$ being non-trivial is made simply because models defined over 1-element algebras are uninteresting. Moreover, for languages with equality it is needed to have a coherent interpretation of the equality symbol. Further, observe that when $A$ is the $2$-element Boolean algebra we recover the standard definition of classical model theory. This is true for all of our basic definitions to follow. 
    
    We frequently drop reference to $A$ and speak simply of $\CP$-models $M$ either when $A$ is clear from context or when we are restricting attention to models defined over the same fixed algebra. In this case we frequently use $M$ interchangeably for the model and its underlying domain. 

    In definition \ref{def:structures} we see an important difference between propositional constants $d\in\CL$ and interpreted nullary relation symbols $R\in\CP$. The former will be interpreted by the same truth value $d\in A$ for each structure $(A,M)$ whereas the latter is interpreted by some truth value $P^M\in A$ in each structure separately.
\end{rem}
 
\begin{defi}\label{def:valuation}
    Let $\CM$ be a $\CP$-structure. We define an $M$-valuation as a mapping $v\colon \mathrm{Var}\rightarrow M$. Given an $M$-valuation $v$, an object variable $x$ and an element $m\in M$ we denote by $v_{x=m}$ the $M$-valuation defined as:\[v_{x=m}(y):=\begin{cases}
        m & \text{if }y=x\\
        v(y) & \text{otherwise}
    \end{cases}\]
    We extend an $M$-valuation $v$ to all $\CP$-terms by setting for each $n$-ary $F\in\CP$: \[v(F(t_1,...,t_n))=F^M(v(t_1),...,v(t_n)).\]
    The \textit{valuation function} $||-||\colon \mathrm{Fm}(\CP)\rightarrow A$ of the $\CP$-formulas for a given $M$-valuation $v$ is the (partial) function defined recursively as follows, where the element $e\in A$ used to define equality is an arbitrary element not equal to the unit 1.\begin{align*}
        ||t_1=t_2||^M_v&=
        \begin{cases}
            1 &\text{if } v(t_1)=v(t_2)\\
            e &\text{if } v(t_1)\not=v(t_2)
        \end{cases}\\
        ||R(t_1,..,t_n)||^M_v &=
        R^M(v(t_1),...,v(t_n)),\text{ for each }n\text{-ary }R\in \CP,\\
        ||\circ(\phi_1,...,\phi_n)||^M_v &=\circ^A(||\phi_1||^M_v,...,||\phi_n||^M_v),\text{ for each }n\text{-ary }\circ\in\CL,\\
        ||\forall x\phi(x)||^M_v &=\inf_{\leq}\{||\phi(x)||^M_{v_{x=m}}:m\in M\},\\
        ||\exists x\phi(x)||^M_v &=\sup_{\leq}\{||\phi(x)||^M_{v_{x=m}}:m\in M\}.
    \end{align*} 
    If the infimum or supremum does not exist we take the truth-value of such a formula to be undefined. We say that a $\CP$-structure is \textit{safe} iff $||\phi||^M$ is defined for every formula and reserve the term $\CP$-model for safe $\CP$-structures. 
\end{defi}
\begin{rem}
    The $e\in A\setminus\{1\}$ to define equality is only important insofar as it means equality is \textit{crisp}. In more specific circumstances there is usually a natural value for inequality to take, e.g. when $A$ is bounded $e$ would be taken as the lower bound $\bot$. 
    
    As often in the case of model theory, we usually suppress the direct consideration of the valuation $v$ and write as if the function $||-||^M$ is acting directly on $\CP\cup M$ sentences, e.g. writing $||R(m_1,...,m_n)||^M$ in place of $||R(t_1,...,t_n)||^M_v$ where $||t_i||^M_v=m_i$. 

    We are exclusively interested in safe $\CP$-structures, hence the reservation of the term models; indeed any finite $\CP$-structure defined over a lattice is safe really eliminating the need for the distinction for finite structures. The distinction is mostly relevant when defining (infinite) $\CP$-structures where we need to be careful to ensure they are safe, frequently this is done by choosing $A$ to be a complete lattice as then again any $\CP$-structure defined over $A$ would be safe.
\end{rem}

The valuation function $||-||^M\colon \mathrm{Fm}\rightarrow A$ is really the central relationship between syntax and semantics; the many-valued counterpart to the classical $\models$ relation. Indeed, many model theoretic definitions can be recast with the valuation function.
\begin{defi}
    Let $M,N$ be two $\CP$-models defined over the same (well-connected) residuated lattice $A$. We say that $M,N$ are strongly equivalent, denoted $M\equiv^s N$ iff for all $\phi\in \mathrm{Fm}\ ||\phi||^M=||\phi||^N$.  

    Defining the quantifier depth ($qd(-)$) of a $\CP$-formula exactly as in classical logic we say that $M,N$ are $k$-strongly equivalent, denoted $M\equiv^s_k N$ iff for all $\phi\in \mathrm{Fm}:qd(\phi)\leq k\ ||\phi||^M=||\phi||^N$. 
\end{defi}

\begin{defi}
    Let $M,N$ be two $\CP$-models defined over the same algebra $A$. A map $f\colon M\rightarrow N$ is a \textit{strong homomorphism} iff:\begin{align*}
        &\text{for every }F\in\CP\text{ and }\bar{m}\in M\ f[F^M(\bar{m})]=F^N(f(\bar{m})).\\
        &\text{ for every }R\in\CP\text{ and }\bar{m}\in M\ R^M(\bar{m})= R^N(f(\bar{m})).
    \end{align*}
    \textit{Embeddings} are injective strong homomorphism and \textit{isomorphisms} are bijective strong homomorphisms. We say that $N$ is a \textit{substructure} of $M$ iff there is an embedding $\iota\colon N\hookrightarrow M$. 
\end{defi}

One can investigate these models as independent mathematical objects, working solely with the valuation function.\footnote{Examples of investigations with this focus are \cite{BiziereGradelNaaf23:Localitysemiring} and \cite{HorcikMoraschiniVidal17:Algconstraintsat}.} However, the logical viewpoint motivates considering a modelling relation $\models$ linking syntax and semantics in a less fine grained way. 

\begin{defi}
    Let $A$ be a (well-connected) residuated lattice. We say that a $\CP$-model $\CM=(A,M)$ is a model of a $\CP$-sentence $\phi$ iff $||\phi||^M\geq 1^A$. We also say that $\phi$ is \textit{valid} in $M$. 

    Given a class of residuated lattices $\mathbb{K}$ the $\mathbb{K}$ \textit{based predicate logic} in language $\CP$ is the relation $\models^{\CP}_{\mathbb{K}}$ defined for $\Gamma\cup\{\phi\}\subseteq \mathrm{Fm}(\CP)$ by:\[\Gamma\models^{\CP}_{\mathbb{K}}\phi\text{ iff for each }A\in\mathbb{K}\text{ and each }\CP\text{-model }(A,M), (A,M)\models\Gamma\text{ implies }(A,M)\models \phi.\]
\end{defi}

The question of axiomatization for the semantically defined predicate logics $\models_{\mathbb{K}}$ is a considerable topic in its own right, one which is naturally quite sensitive to the properties of the subset $\mathbb{K}$.\footnote{For a detailed discussion one can consult \cite[Chapter 7]{CintulaNoguera21:LogicImplication}.} In our investigation our restriction to considering models defined over a fixed algebra $A$ would effectively mean taking the subset $\mathbb{K}$ as a singleton. In practice, our interest is really with the models as objects themselves and in particular the relationship between the $\models$ relation and strong equivalence $\equiv^s$. Indeed, the modelling relation recasts the essential properties of our underlying algebras; the critical link is the behaviour of the lattice connectives $\land$ and $\lor$ and the residual operators $\und$ and $/$ with respect to $\models$. \begin{align*}
    M\models\phi\land\psi \text{ iff }&M\models\phi \text{ and } M\models\psi.\\
    M\models\phi\lor\psi \text{ iff }&M\models\phi\text{ or }M\models\psi.\\
    M\models\phi\und\psi \text{ iff }||\phi||^M&\leq ||\psi||^M \text{ iff }M\models\psi/\phi. 
\end{align*}
Notably, in the case of finite models these extend to the behaviour of $\forall$ and $\exists$:\begin{align*}
    M\models \forall x\phi(x)&\text{ iff for all }m\in M\ M\models\phi(m).\\
    M \models\exists x\phi(x)&\text{ iff for some }m\in M\ M\models\phi(m).
\end{align*}

This points towards a more `axiomatic' version of our investigation; for any class of models defined over an algebra $A$ in signature $\CL$ we could proceed on the assumption that there are some algebraic terms $\tau_{\land}(x,y), \tau_{\lor}(x,y)$ and $\tau_{\leq}(x,y)$ with the behaviour of $\land,\lor$ and $\und (/)$ relative to a given $\models$ relation described above. We will however proceed in a more concrete setting.

There are two subclasses of models that are important for our investigation. The first are the witnessed models.
\begin{defi}
    We say that a model $(A,M)$ is $\exists$\textit{-witnessed} iff for all formulas $\exists x\phi(x)$ there is an element $m\in M$ for which $\phi(m)$ achieves the value of the supremum, i.e. $||\exists x\phi(x)||^M=||\phi(m)||^M$. We define $\forall$-\emph{witnessing} similarly and say that $M$ is \textit{witnessed} when it is both $\exists$ and $\forall$-witnessed. 
\end{defi}
\begin{rem}\label{rem:witnessedifflinear}
    In the case of finite models $\exists$ and $\forall$ essentially become shorthand for strings of $\land$ and $\lor$ and thus whether a model is witnessed is determined by their behaviour. In particular, finite models defined over linearly ordered algebras (chains) are always witnessed. Conversely, if \textit{every} (finite) $\CP$-model over an algebra $A$ is witnessed then $A$ is a chain. 
\end{rem} 

The second are the standard expansions of models where we add fresh relation symbols to our language and whose intended interpretation are the truth values of the algebra. This lets us introduce syntactic truth constants at the predicate language level rather than in the algebraic signature. 
\begin{defi}
    Let $A$ be a well-connected residuated lattice and $\CP$ a predicate language. We expand the language $\CP$ to one containing a $0$-ary relation symbol for each element $a\in A$, i.e. $\CP^\ast:=\CP\cup A$. For each $\CP$-model $M$ defined over $A$ its \textit{standard expansion} $M^\ast$ is the $\CP^\ast$-model for which $||a||^M=a$ for all $a\in A$. 
\end{defi}
\begin{rem}
    A common proof strategy in many-valued model theory is to work with the standard expansions of models before pulling the results back to the $\CP$-reducts. This naturally requires a bridge result between the behaviour of a model and its standard expansion. The first and most basic is simply that for any $\CP$-formula $\phi(\bar{x})$ and $\bar{m}\in M\ ||\phi(\bar{m})||^M=||\phi(\bar{m})||^{M^\ast}$. For this reason we usually suppress reference to the standard expansion in the $||-||$ notation.
\end{rem}

Finally, we introduce an essential tool for model theory - back-\&-forth systems. These have been introduced for the many-valued setting in \cite{DellundeGarciaNoguera18:Fuzzybackandforth} for models defined over UL-algebras. In that discussion it is noted that the proofs can be easily generalised to apply to models over a wider choice of algebras and this includes residuated lattices. Importantly, whilst we introduced the basic definitions for arbitrary languages, the results for back-and-forth systems and our main investigation apply only to \textit{relational languages with object constants}, i.e. those without function symbols of positive arity. For this reason, from now on we assume all predicate  languages $\CP$ are relational with object constants. Given we will be working with a fixed defining algebra we make slight adjustments to the definitions and results presented in \cite{DellundeGarciaNoguera18:Fuzzybackandforth}. Specifically, rather than defining partial isomorphisms on models as a pair of functions $(f,g)$ where $f$ maps between the algebras and $g$ the underlying domains, we define them only as maps on the underlying domain, effectively assuming that $f$ is always the identity map. This allows us to replace the use of nesting depth, which is sensitive to the algebraic operators, with the more familiar quantifier depth introduced above. Nevertheless, the proof of the theorem is identical to \cite[Theorem 22]{DellundeGarciaNoguera18:Fuzzybackandforth}.

\begin{defi}
    Let $M,N$ be $\CP$-models. We say that a partial mapping $r\colon M\rightarrow N$ is a \textit{partial isomorphism} (p.iso) from $M$ to $N$ iff $r$ is an embedding on its domain, i.e. $r$ is injective and \begin{align*}
        &\text{for every }F\in\CP\ F^M\in dom(r) \text{ and }r(F^M)=F^N,\\
        &\text{for every }R\in\CP \text{ and } \bar{m}\in M:\bar{m}\in dom(r)\ R^M(\bar{m})=R^N(r(\bar{m})).
    \end{align*} 

    We say $M$ and $N$ are finitely isomorphic, denoted $M\isom_f N$ iff there is a sequence $\langle I_n:n\in\omega\rangle$ with the following properties:\begin{enumerate}
        \item Every $I_n$ is a non-empty set of partial isomorphisms from $M$ to $N$ and $I_{n+1}\subseteq I_n$.
        \item Forth: for every $r\in I_{n+1}$ and $m\in M\ \exists r'\in I_n:r\subseteq r'$ and $m\in dom(r')$.
        \item Back: for every $r\in I_{n+1}$ and $n\in N\ \exists r'\in I_n:r\subseteq r'$ and $n\in im(r')$. 
    \end{enumerate}
    We say that $M$ and $N$ are $k$-finitely isomorphic, denoted $M\isom_k N$ iff there is such a sequence $I_{j}:j\leq k$
\end{defi}
\begin{rem}
    Back-and-forth systems provide the second important bridge result between a model and its standard expansion, namely that $M\isom_k N$ iff $M^\ast\isom_k N^\ast$. This follows simply by the observation that any $\CP$ back-\&-forth system for $M$ and $N$ is immediately also a $\CP^\ast$ system from $M^\ast$ and $N^\ast$ and vice versa.
\end{rem}

\begin{thm}\label{thm:EF}
    Let $\CP$ be a finite relational language with object constants, $M,N$ be $\CP$-models and $k\in\omega$. If $M\isom_k N$ with $\bar{m}\mapsto\bar{n}\in I_k$ then $M,\bar{m}\equiv^s_k N,\bar{n}$.  
\end{thm}

\section{Hanf Locality}\label{sec:Hanf}

We now turn to our first locality result - Hanf locality. Hanf's theorem for classical logic is concerned with isomorphism types, a kind of syntactic encoding of b\&f-systems, applied to local neighbourhoods. Our first task is to set up a coherent many-valued counterpart to both these concepts. Throughout this section we define our models over a fixed well-connected residuated lattice $A$.

We start with the isomorphism types. An extension of the notion of diagrams, at a base level the idea is to define a formula that encodes partial isomorphisms and then extend these with formulas that encode the back and forth conditions. For our models, by making use of the standard expansions we can make an adjustment to the usual classical encoding as described in \cite{EbbinghausFlum95:Finmodeltheory}. 

\begin{defi}
    Let $\CP$ be a finite relational language with object constants and $A$ be a well-connected residuated lattice. Let $s\in\omega$ and $v_1,...,v_s$ be variables. For each finite $\CP$-model $M$, $m_1,...,m_s\in M$ and $k\in\omega$ we define the $\CP^\ast$-formula $\phi^k_{M,\bar{m}}(\bar{v})$ inductively as follows:\begin{align*}
        \phi^0_{M,\bar{m}}(\bar{v})&:= \bigland\{[R(w_1,...,w_t)\und a]\land [R(w_1,...,w_t)/a]\in \mathrm{Fm}:&\\
        & w_1,...,w_t\in\{v_1,...,v_s\}^t, R(w_1,...,w_t)\in \CP \text{ and }||R(\bar{m})||^M=a\}.&\\
        \phi^k_{M,\bar{m}}(\bar{v})&:=\bigland\limits_{m\in M}\exists v_{s+1}\phi^{k-1}_{\bar{m}m}(\bar{v},v_{s+1})\land\forall v_{s+1}\biglor\limits_{m\in M}\phi^{k-1}_{\bar{m}m}(\bar{v},v_{s+1}).&
    \end{align*}
    We call $\phi^k_{M,\bar{m}}(\bar{v})$ the $k$\textit{-isomorphism type} of $\bar{m}$ in $M$. We frequently drop reference to $M$ when it is clear from context.\footnote{The definition of $0$-isomorphism types is only sensitive to the behaviour of $\bar{m}$ in $M$ and independent from the rest of the model.}  
\end{defi}
\begin{rem}
    The set used in $\phi^0_{M,\bar{m}}(\bar{v})$ is finite for any selection of $\bar{m}$ and moreover as $M$ is finite by a simple induction on $k$ for $s,k\geq 0$ the set $\{\phi^k_{M,\bar{m}}:\bar{m}\in M^s$\} is finite. Thus the formulas are all well-defined. 

    Recall we have crisp equality in our language and so the $0$-isomorphism types contain in particular $(v_i=v_j)\und 1\land (v_i=v_j)/1$ iff $m_i=m_j$ and $v_i=v_j\und e\land v_i=v_j/e$ iff $m_i\not=m_j$ (where $e$ is the fixed value used for inequality in definition \ref{def:valuation}).  
\end{rem}

We can quickly establish some basic facts about isomorphism types. 
\begin{lem}\label{lem:isotype}
    For any $\CP$-model and $k\in\omega$ we have the following:
    \begin{enumerate}[label=\roman*.)]
        \item $qd(\phi^k_{\bar{m}}(\bar{v}))=k$,
        \item $M^\ast\models\phi^k_{\bar{m}}(\bar{m})$,
        \item For any $N$ and $\bar{n}\in N\ N^\ast\models\phi^0_{\bar{m}}(\bar{n})$ iff $\bar{m}\mapsto\bar{n}$ is a partial isomorphism (p.iso).
        \item For any $N$ and $\bar{n}\in N\ N^\ast\models\phi^k_{\bar{m}}(\bar{n})$ implies $\bar{m}\mapsto\bar{n}$ is a p.iso. 
    \end{enumerate}
\end{lem}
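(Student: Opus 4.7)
The plan is to prove (a), (b), (c) in turn, using only the compatibility of $\models$ with $\land$, $\lor$, $\und$, $/$ and the finite witnessing of $\forall$, $\exists$ stated just before the lemma, together with the order-interpreting identities $1 \leq a \und b$ iff $a \leq b$ iff $1 \leq b/a$. Part (a) is a straightforward induction on $k$: at $k=0$ the formula is a finite conjunction of quantifier-free $\CP^\ast$-atoms glued by $\und$, $/$ and $\land$, so $qd(\phi^0_{\bar{m}}) = 0$; at the step $\phi^k_{\bar{m}}$ adds exactly one quantifier to formulas $\phi^{k-1}_{\bar{m}m}$ which by IH have $qd = k-1$, while the surrounding finite $\bigland$ and $\biglor$ do not raise the depth, giving $qd(\phi^k_{\bar{m}}) = k$.

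Part (b) is again by induction on $k$. For the base, since $\land$ preserves $\geq 1$ it suffices to check each conjunct of $\phi^0_{\bar{m}}(\bar{m})$ individually. For any atom $R(\bar{v})$ with $||R(\bar{m})||^M = a$, the standard expansion gives $||a||^{M^\ast} = a$, so $||R(\bar{m})||^{M^\ast} = ||a||^{M^\ast}$ and the order-interpreting identities for $\und$ and $/$ force $M^\ast \models R(\bar{m}) \und a$ and $M^\ast \models R(\bar{m})/a$; the equality conjuncts are handled the same way via crispness. At the step, $\phi^k_{\bar{m}}(\bar{m})$ splits into a conjunction of existential statements indexed over $m \in M$ together with one universal statement. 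For each $m \in M$, picking $v_{s+1} = m$ as witness and invoking IH gives $M^\ast \models \phi^{k-1}_{\bar{m}m}(\bar{m}, m)$, and finite witnessing of $\exists$ turns this into the desired existential conjunct. For the universal conjunct, I fix an arbitrary $m' \in M$ and take $m = m'$ in the disjunction; IH gives the corresponding disjunct, well-connectedness of $A$ turns this into $M^\ast \models \biglor_{m \in M} \phi^{k-1}_{\bar{m}m}(\bar{m}, m')$, and finite witnessing of $\forall$ closes the step.

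For (c), the $(\Leftarrow)$ direction is exactly the base calculation of (b): if $\bar{m} \mapsto \bar{n}$ is a p.iso then $||R(\bar{n})||^N = ||R(\bar{m})||^M$ for every relevant atom, so each residual conjunct of $\phi^0_{\bar{m}}(\bar{n})$ is modelled by $N^\ast$ and hence so is their conjunction. Conversely, if $N^\ast \models \phi^0_{\bar{m}}(\bar{n})$ then each residual conjunct forces $||R(\bar{n})||^N = ||R(\bar{m})||^M$ via the order-interpreting identities, and the equality conjuncts --- which record $m_i = m_j$ versus $m_i \neq m_j$ crisply --- force $n_i = n_j$ iff $m_i = m_j$, so $\bar{m} \mapsto \bar{n}$ is well-defined, injective and preserves every relation, i.e.\ is a p.iso. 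The only real obstacle throughout is purely notational: keeping the predicate-level truth constants $a \in \CP^\ast$ separate from the algebra values $a \in A$ and threading the $\models$-to-$\leq$ translation for $\und$ and $/$ cleanly in both directions; once that is in hand each step reduces to the routine classical argument.
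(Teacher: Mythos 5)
Your proof is correct and follows essentially the same route as the paper's: parts (a) and (b) by routine induction on $k$, and part (c) by using the order-interpreting residual conjuncts to force $||R(\bar{n})||^N=||R(\bar{m})||^M$ together with the crisp equality conjuncts for injectivity. (One cosmetic point: passing from a single modelled disjunct to the modelled disjunction in the inductive step of (b) needs only the lattice fact $a\leq a\lor b$, not well-connectedness, which is only needed for the converse direction exploited later in Lemma~\ref{lem:typeEF}.)
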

\begin{proof}
    $i.$ is immediate from the syntax construction and $ii.$ follows by an easy induction. For $iii.$ suppose $N^\ast\models\phi^0_{\bar{m}}(\bar{n})$. Firstly, the map $\bar{m}\mapsto\bar{n}$ is injective by our previous remark: \[m_i\not=m_j \Rightarrow v_i=v_j\und e\land v_i=v_j/ e\in \phi^0_{\bar{m}}(\bar{v}) \Rightarrow ||n_i=n_j||^{N^\ast}=e \Rightarrow n_i\not=n_j.\] 
    The only function symbols allowed are object constants $c\in\CP$ and: \[c^M=m_i\in\bar{m} \Rightarrow c=v_i\und 1\land c=v_i/1\in\phi^0_{\bar{m}}(\bar{v})\Rightarrow ||c=n_i||^N=1\Rightarrow c^N=n_i.\]
    Finally, letting $R\in\CP$ and $\bar{m'}\subseteq \bar{m}$ be such that $R^M(\bar{m'})=a$ then $R(\bar{v'})\und a\land R(\bar{v'})/a\in \phi^0_{\bar{m}}(\bar{v})$ which implies $N^\ast\models \phi^0_{\bar{m}}(\bar{n})\Rightarrow ||R(\bar{n'})\und a\land R(\bar{n'})/a||\geq 1$ and so $||R(\bar{n'})||^N=a=||R(\bar{m'})||^M$. Therefore $\bar{m}\mapsto\bar{n}$ is a p.iso. 

    For the converse, suppose $\bar{m}\mapsto\bar{n}$ is a p.iso then let $R(\bar{w})$ be any atomic $\CP$-formula with arity $t$ occurring in $\phi^0_{\bar{m}}(\bar{v})$. Then setting $a\coloneqq ||R(\bar{m})||^M$ we have $||R(\bar{n})||^N=||R(\bar{m})||^M=a$ and in particular $||R(\bar{n})\und a\land R(\bar{n})/a||^N\geq 1$, i.e. $N^\ast\models\phi^0_{\bar{m}}(\bar{n})$.

    For $iv.$ we proceed by induction on $k$ where $k=0$ is the first direction of $iii.$ Letting $k>0$ and supposing $N^\ast\models\phi^k_{\bar{m}}(\bar{n})$, in particular $N^\ast\models\exists v_{s+1}\phi^{k-1}_{\bar{m}m}(\bar{n}v_{s+1})$ for all $m\in M$. Choosing one arbitrarily, as $A$ is well-connected and $N$ finite $\exists n\in N:N^\ast\models\phi^{k-1}_{\bar{m}m}(\bar{n}n)$. By the induction hypothesis $\bar{m}m\mapsto\bar{n}n$ is a p.iso and so in particular so is $\bar{m}\mapsto\bar{n}$.
\end{proof}
\begin{rem}
    An immediate consequence of $iii.$ is given any two tuples $\bar{m}\in M$ and $\bar{n}\in N$ they realise the same type iff $\bar{m}\mapsto\bar{n}$ is a partial isomorphism.  
\end{rem}

These help prove their defining behaviour - their encoding of b\&f-systems. 

\begin{lem}\label{lem:typeEF}
    Let $\CP$ be a finite relational language with object constants and $A$ a well-connected residuated lattice. Let $M,N$ be finite models defined over $A$. Let $\bar{m}\in M$, $\bar{n}\in N$ and $k\in\omega$. The following are equivalent:\begin{enumerate}[nosep, label=\roman*.]
        \item $M\isom_k N$ with $\bar{m}\mapsto\bar{n}\in I_k$.
        \item $N^\ast\models\phi^k_{M,\bar{m}}(\bar{n})$.
        \item $M,\bar{m}\equiv^s_k N,\bar{n}$.
    \end{enumerate}
\end{lem}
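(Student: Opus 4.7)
The plan is to establish the cycle $(ii) \Rightarrow (i) \Rightarrow (iii) \Rightarrow (ii)$, leaning heavily on Lemma~\ref{lem:isotype} throughout.

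For $(ii) \Rightarrow (i)$, I would proceed by induction on $k$. The base case $k=0$ is immediate from Lemma~\ref{lem:isotype}(c): $N^\ast \models \phi^0_{M,\bar{m}}(\bar{n})$ says precisely that $\bar{m} \mapsto \bar{n}$ is a partial isomorphism, so the singleton $I_0 = \{\bar{m} \mapsto \bar{n}\}$ works. For the inductive step, I unfold the definition of $\phi^k$, which decomposes as a conjunction of $\bigland_{m \in M} \exists v_{s+1}\phi^{k-1}_{\bar{m}m}$ (the forth content) and $\forall v_{s+1}\biglor_{m \in M} \phi^{k-1}_{\bar{m}m}$ (the back content). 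Because $M$ and $N$ are finite, the quantifiers collapse into finite lattice operations; then using both the stability of $\land$ around $\geq 1$ and well-connectedness of $A$ (so that $\biglor_m a_m \geq 1$ forces some $a_m \geq 1$), I extract for every $m \in M$ a witness $n \in N$ with $N^\ast \models \phi^{k-1}_{\bar{m}m}(\bar{n},n)$, and dually for every $n \in N$ a matching $m \in M$. The induction hypothesis produces b\&f-sequences of length $k-1$ for each extended pair, and their union at each level yields the required $\langle I_j : j \leq k \rangle$ with $\bar{m}\mapsto\bar{n} \in I_k$.

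For $(i) \Rightarrow (iii)$, I would invoke the pointed refinement of Theorem~\ref{thm:EF}. The proof from \cite{DellundeGarciaNoguera18:Fuzzybackandforth} in fact runs by induction on $\phi$ establishing the stronger fact that for any $r \in I_j$ and any $\phi$ of quantifier depth $\leq j$, one has $||\phi(\bar{a})||^M = ||\phi(r(\bar{a}))||^N$ whenever $\bar{a} \subseteq dom(r)$. Specialising to $r = (\bar{m} \mapsto \bar{n}) \in I_k$ yields exactly (iii). For $(iii) \Rightarrow (ii)$, I would first lift strong $k$-equivalence from $\CP$- to $\CP^\ast$-formulas by a straightforward structural induction: the new atoms are the truth-constant symbols $a \in A$, which by construction take the fixed value $a$ in both $M^\ast$ and $N^\ast$, so the extra atoms cannot separate the two expansions. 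Hence $M^\ast,\bar{m} \equiv^s_k N^\ast,\bar{n}$. By Lemma~\ref{lem:isotype}(a,b), $\phi^k_{M,\bar{m}}$ is a $\CP^\ast$-formula of quantifier depth exactly $k$ whose valuation on $\bar{m}$ in $M^\ast$ is $\geq 1$; transferring this equality of valuations from $\bar{m}$ to $\bar{n}$ gives $||\phi^k_{M,\bar{m}}(\bar{n})||^{N^\ast} \geq 1$, which is (ii).

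The main obstacle will be the witness extraction inside $(ii) \Rightarrow (i)$, specifically the back content encoded by $\forall v_{s+1}\biglor_{m\in M}\phi^{k-1}_{\bar{m}m}$. Classically $\biglor$ is Boolean disjunction and the extraction is automatic; here $\biglor$ is the lattice join, and only the well-connectedness hypothesis on $A$ permits the passage from $\biglor_m a_m \geq 1$ to some individual $a_m \geq 1$. This is the single place in the argument where well-connectedness is genuinely used, and weakening it would already break the back condition at this step, explaining why the hypothesis is needed in the statement.
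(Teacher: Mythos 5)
Your overall architecture coincides with the paper's proof: it likewise handles $ii.\Rightarrow i.$ by induction on $k$, using Lemma~\ref{lem:isotype}(c) at $k=0$ and extracting witnesses for the $\exists$ and $\biglor$ parts of $\phi^k_{M,\bar m}$ from well-connectedness (finiteness turning the quantifiers into finite joins and meets), assembles the levels $I_j$ by taking unions of the extended partial isomorphisms, gets $i.\Rightarrow iii.$ from Theorem~\ref{thm:EF} (your explicit remark that the cited proof yields the pointed version is fine and matches what the paper uses implicitly), and reduces $iii.\Rightarrow ii.$ to Lemma~\ref{lem:isotype}(a,b) together with the claim that strong equivalence passes to the standard expansions. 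One small inaccuracy: well-connectedness is used twice in $ii.\Rightarrow i.$, for the forth conjuncts $\exists v_{s+1}\phi^{k-1}_{\bar m m}$ as well as for the back disjunction, so the back step is not quite ``the single place''.

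The step where your justification is genuinely too thin is the lifting $M,\bar m\equiv^s_k N,\bar n\ \Rightarrow\ M^\ast,\bar m\equiv^s_k N^\ast,\bar n$. The ``straightforward structural induction'' you invoke breaks at the quantifier case: to compare $||\exists y\,\psi(y,\bar m)||^{M^\ast}=\sup_{u\in M}||\psi(u,\bar m)||^{M^\ast}$ with the corresponding supremum over $N$, the induction hypothesis would have to relate values at arbitrary $u\in M$ to values at suitable elements of $N$, and $\CP$-strong equivalence of the fixed pair $(\bar m,\bar n)$ supplies no such correspondence of witnesses; the observation that the new nullary symbols take the same fixed value in both expansions does not by itself control subformulas in which those constants sit under quantifiers (e.g.\ $\exists y\,(R(y)\und a)$). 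The paper asserts the same preservation in one line inside this lemma, and where it treats $\CP^\ast$ carefully (Corollary~\ref{cor:standardEF}) it obtains $\CP^\ast$-equivalence not from $\CP$-strong equivalence by induction but from the back-and-forth system, which transfers verbatim to $\CP^\ast$ and is then applied to the finite sublanguages $\CP\cup\{\bar a\}$. So within your cycle $ii.\Rightarrow i.\Rightarrow iii.\Rightarrow ii.$ this preservation is exactly the point that still needs an argument; as written, ``the extra atoms cannot separate the two expansions'' is the statement to be proved rather than a proof of it.
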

\begin{proof}
    That $iii.$ implies $ii.$ is immediate from our previous lemma and noting that strong equivalence is preserved when moving to the standard expansion and theorem \ref{thm:EF} gives that $i.$ implies $iii.$. So suppose $N^\ast\models\phi^k_{\bar{m}}(\bar{n})$, we describe how to construct the b-\&-f system $\langle I_j\rangle_{j\leq k}$ recursively. 
    
    We define $I_k:=\{\bar{m}\mapsto\bar{n}\}$, this is a p.iso by part $iv.$ of the previous lemma and the assumption that $N^\ast\models\phi^k_{\bar{m}}(\bar{n})$. Indeed, for all $\bar{u}\mapsto\bar{v}\in I_k\ N^\ast\models\phi^k_{\bar{m}}(\bar{n})$. Suppose we have a constructed a set of p.isos $I_j$, with $0<j\leq k$ such that for all $\bar{u}\mapsto\bar{v}\in I_j\ N^\ast\models\phi^{j}_{\bar{u}}(\bar{v})$, and let $\bar{u}\mapsto\bar{v}\in I_j$. Now, for each $m\in M\ N^\ast\models\exists v_{s+1}\phi^{j-1}_{\bar{u}m}(\bar{v},v_{s+1})$ and so as $A$ is well-connected $\exists n\in N:N^\ast\models\phi^{k-1}_{\bar{u}m}(\bar{v},n)$. Again by $iv.$ of the previous lemma $\bar{u}m\mapsto\bar{v}n$ is a p.iso and so we may define the non-empty set of p.isos $I^{l}_{\bar{u},\bar{v}}:=\{\bar{u}m\mapsto\bar{v}n:N^\ast\models\phi^{j-1}_{\bar{u}m}(\bar{v},n)\}$. Similarly, letting $n\in N$ we have in particular that $N^\ast\models\biglor\limits_{m\in M}\phi^{j-1}_{\bar{u}m}(\bar{v},n)$ and as $A$ is well-connected there is an $m\in M:N^\ast\models\phi^{j-1}_{\bar{u}m}(\bar{v},n)$. Therefore we may once again define a non-empty set of p.isos $I^{r}_{\bar{u},\bar{v}}:=\{\bar{u}m\mapsto\bar{v}n:N^\ast\models\phi^{k-1}_{\bar{m}m}(\bar{n}n)\}$. We then define $I_{\bar{u},\bar{v}}=I^{l}_{\bar{u},\bar{v}}\cup I^{r}_{\bar{u},\bar{v}}$. 

    We may repeat this process on each element of $I_j$ for $j<k$ and take the total union to define $I_{j-1}$, observing that the result is a set of p.isos where $\forall \bar{u}\mapsto\bar{v}\in I_{j-1}\ N^\ast\models\phi^{j-1}_{\bar{u}}(\bar{v})$. Proceeding recursively, we define our sequence $\langle I_j\rangle_{j\leq k}$,  by construction each set is a non-empty set of p.isos, the sequence satisfies the b-\&-f conditions, and $\bar{m}\mapsto\bar{n}\in I_k$ as required. 
\end{proof}

Later we will wish to apply this result directly to standard expansions of models. Whilst the expanded language $\CP^\ast$ is not finite we can recover the equivalence with a little extra work.

\begin{cor}\label{cor:standardEF}
    Let $\CP$ be a finite relational language with object constants and $A$ a well-connected residuated lattice. Let $M,N$ be finite models defined over $A$. Let $\bar{m}\in M,\bar{n}\in N$ and $k\in\omega$. The following are equivalent:\begin{enumerate}[label=\roman*.]
        \item $M\isom_k N$ with $\bar{m}\mapsto\bar{n}\in I_k$.
        \item $M^\ast\isom_k N^\ast$ with $\bar{m}\mapsto\bar{n}\in I_k$.
        \item $N^\ast\models\phi^k_{M,\bar{m}}(\bar{n})$.
        \item $M,\bar{m}\equiv^s_k N,\bar{n}$.
        \item $M^\ast,\bar{m}\equiv^s_k N^\ast,\bar{n}$. 
    \end{enumerate}    
\end{cor}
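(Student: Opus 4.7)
The plan is to augment Lemma~\ref{lem:typeEF}, which already establishes (i) $\Leftrightarrow$ (iii) $\Leftrightarrow$ (iv), by incorporating (ii) and (v) through two auxiliary observations.

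First, I would establish (i) $\Leftrightarrow$ (ii) via the observation recorded in the remark following the definition of the standard expansion. The only symbols in $\CP^\ast\setminus\CP$ are nullary relation symbols representing truth constants, so the partial-isomorphism preservation condition on them is vacuous. Hence any sequence $\langle I_j\rangle_{j\leq k}$ witnessing $M,\bar{m}\isom_k N,\bar{n}$ over $\CP$ simultaneously witnesses $M^\ast,\bar{m}\isom_k N^\ast,\bar{n}$ over $\CP^\ast$, and conversely; in both cases the distinguished map $\bar{m}\mapsto\bar{n}$ lies in $I_k$.

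Second, for (iv) $\Leftrightarrow$ (v), the direction (v) $\Rightarrow$ (iv) is immediate since every $\CP$-formula is a $\CP^\ast$-formula and $||\phi||^M = ||\phi||^{M^\ast}$ on $\CP$-formulas. The main obstacle lies in the converse: $\CP^\ast$ is generally not a finite relational language, so Lemma~\ref{lem:typeEF} does not apply to it directly. The workaround is that any single $\CP^\ast$-formula $\phi(\bar{v})$ with $qd(\phi)\leq k$ mentions only finitely many truth constants $a_1,\dots,a_n\in A$. Setting $\CP':=\CP\cup\{a_1,\dots,a_n\}$, which is a finite relational language, the b\&f system supplied by (ii) restricts coordinatewise to one witnessing $M^\ast\!\upharpoonright\!\CP',\bar{m}\isom_k N^\ast\!\upharpoonright\!\CP',\bar{n}$. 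Applying Lemma~\ref{lem:typeEF} over $\CP'$ gives $||\phi(\bar{m})||^{M^\ast\upharpoonright\CP'} = ||\phi(\bar{n})||^{N^\ast\upharpoonright\CP'}$, and since $\phi$ is a $\CP'$-formula these values coincide with $||\phi(\bar{m})||^{M^\ast}$ and $||\phi(\bar{n})||^{N^\ast}$, yielding (v).

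Together these close a cycle between (i)--(v). The only real subtlety is that $\CP^\ast$ may fail to be finite, which is handled entirely by the observation that each individual formula involves only finitely many truth-constant symbols, reducing the situation to the finite relational case already covered by Lemma~\ref{lem:typeEF}.
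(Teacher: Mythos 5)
Your proposal is correct and takes essentially the same route as the paper: the equivalence of (i)--(iv) via Lemma~\ref{lem:typeEF} together with the interchangeability of $\CP$ and $\CP^\ast$ back-and-forth systems, the trivial reduct direction (v)$\Rightarrow$(iv), and the reduction of (ii)$\Rightarrow$(v) to the finite sublanguage $\CP\cup\{a_1,\dots,a_n\}$ determined by the truth constants occurring in a given formula. The only cosmetic quibble is that the p.iso condition on the added nullary symbols is not strictly vacuous but rather automatically satisfied, since both standard expansions interpret each constant $a$ as the value $a$ itself.
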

\begin{proof}
    Our observation that any $\CP$ back-\&-forth system for $M$ and $N$ is also a $\CP^\ast$ system for $M^\ast$ and $N^\ast$ combined with the previous lemma gives the equivalence of $i.$ through $iv.$ and by taking the $\CP$ reduct $v.$ implies $iv.$ To conclude we show that $ii.$ implies $v.$. Let $\phi\in \mathrm{Fm}(\CP^\ast)$ be of quantifier depth at most $k$ and $\bar{a}$ be the finite list of the newly introduced nullary relation symbols occurring in $\phi$. We consider the finite relational language $\CP'=\CP\cup\{\bar{a}\}$ and the $\CP'$-reducts $M'$ and $N'$ of $M^\ast$ and $N^\ast$ respectively. Then $ii.$ implies that $M',\bar{m}\isom_k N',\bar{n}$ and we may apply our previous lemma to $M',\bar{m}$ and $N',\bar{n}$ to conclude $M',\bar{m}\equiv^s_k N',\bar{n}$. In particular $||\phi||^{M'}=||\phi||^{N'}$. Then as $\phi$ was arbitrary $M^\ast,\bar{m}\equiv^s_k N^\ast,\bar{n}$.  
\end{proof}

Our second concept for generalisation is distance in a model and the associated local neighbourhoods. Classically this is done via the Gaifman graph of a structure, the graph whose vertices are the elements of the model and which are adjacent iff they appear as part of a tuple in any true atomic relation. This lets us use the standard distance metric on graphs to define a distance metric on any classical model in an arbitrary signature. In our many-valued setting we can define a number of variants to the Gaifman graph metric; we start with a na\"{i}ve version where we again link the edges of the Gaifman graph to modelling. 
\begin{defi}\label{def:Gaifgraph}
    Let $A$ be a well-connected residuated lattice. We let $M$ be a $\CP$ model and define the Gaifman graph $G(M)$ as the graph with vertex set $M$ and $mEn$ iff there is an $R\in\CP$ and $\bar{s}\in M:m,n\in\bar{s}$ and $M\models R(\bar{s})$. We let $d(x,y)$ denote the standard distance metric on the graph $G(M)$, i.e.\ $d(m,n)$ is the length of the shortest path from $m$ to $n$ in $G(M)$. 
    
    For $\CP$ relational without constant symbols, for $r\in\omega$ and $m\in M$ we define the $r$-sphere of $m$ as:\[B(r,m):=\{n\in M:d(m,n)\leq r\},\] and use this interchangeably to denote the set of points and the induced \textit{strong} substructure of $M$, i.e.\ where each $R\in\CP$ is interpreted as the restriction of $R^M$ to $B(r,m)$. This naturally extends to use $B(r,m),m$ to denote the $\CP\cup\{c\}$-model where $c$ is interpreted as $m$. For $\bar{m}\in M^k$ we define $B(r,\bar{m}):=\bigcup\limits_{m\in\bar{m}} B(r,m)$.

    If $\CP$ does contain constant symbols we require that all $r$-spheres for $m$ contain the `common core' determined by the interpretation of the constant symbols.\begin{align*}
    B(r,\varnothing)&:=\bigcup\limits_{c\in\CP}\{n\in M:d(c^M,n)\leq r\},\\
    B(r,m)&:=\{n\in M:d(m,n)\leq r\}\cup B(r,\varnothing).
\end{align*}
\end{defi}

One can directly define the $r$-sphere type of a tuple $\bar{m}$ in model $M$ to be the $0$-isomorphism type of $B(r,\bar{m})$ - that is $\phi^0_{B(r,\bar{m})}(\bar{v})$. It then follows by Lemma \ref{lem:isotype} that for any other tuple $\bar{n}$ in a model $N\ N^\ast\models\phi^0_{B(r,\bar{m})}(B(r,\bar{n}))$ iff $B(r,\bar{m}),\bar{m}\isom B(r,\bar{n}),\bar{n}$ iff $M^\ast\models\phi^0_{B(r,\bar{n})}(B(r,\bar{m}))$. Note however that strictly speaking the formula $\phi^0_{B(r,\bar{m})}(\bar{v})$ depends on the exact listing of the elements of $B(r,\bar{m})$. This means we should really speak of \textit{an} $r$-sphere type of $\bar{m}$. Given the primary purpose of discussing $r$-sphere types is to have a shorthand way of saying that local neighbourhoods are isomorphic, it proves more convenient to shift perspective and consider isomorphism types directly via isomorphism equivalence classes of finite models.
\begin{defi}
    Let $A$ be a well-connected residuated lattice. By an \textit{isomorphism type} we mean an equivalence class for the isomorphism relation on $\CP\cup\{c_1,...,c_s\}$-models defined over $A$. We use the letters $\iota,\tau$ etc. to refer to isomorphism types. 
    
    If $\bar{m}\in M^s$ we say that $\bar{m}$ \textit{r}-\textit{realises} $\iota$, or equivalently that $\iota$ is \textit{the} $r$-sphere type of $\bar{m}$ iff $B(r,\bar{m}),\bar{m}$ belongs to $\iota$.  
\end{defi}
\begin{rem}\label{rem:isotypes}
    The direct via isomorphism definition aligns with syntactic definition of $0$-isomorphism types in the sense that two tuples $\bar{m}$ and $\bar{n}$ have the same $r$-sphere type iff the $r$-sphere of $\bar{n}$ realises a formula $\phi^0_{B(r,\bar{m})}(\bar{v})$ iff the $r$-sphere of $\bar{m}$ realises a formula $\phi^0_{B(r,\bar{n})}(\bar{v})$.  
\end{rem}

The application of isomorphism types to spheres is enough to state Hanf locality.
\begin{thmC}[\cite{Libkin04:Elementsfinmod}] \emph{Hanf Locality (Classical)}
    Let $M,N$ be finite (classical) $\CP$-models and $k\in\omega$. Let $\langle r_j\rangle_{j\leq k}$ be recursively defined by $r_0=0$, $r_{j+1}=3r_j+1$. Let $e$ be the maximum size of the $r_k$-spheres in $M$ and $N$ and suppose that for each $r\leq r_k$ and sphere type $\iota$ either $i.$ or $ii.$ holds:\begin{enumerate}[label=\roman*.]
        \item $M$ and $N$ have the same number of elements that $r$-realise $\iota$.
        \item Both $M$ and $N$ have more than $k\cdot e$ elements that $r$-realise $\iota$. 
    \end{enumerate}
    Then $M\equiv_k N$.
\end{thmC}
This theorem fails in any residuated lattices with 2 distinct elements not above the unit element, including all the standard algebras discussed in examples \ref{exa:wellconnectboundresi} and \ref{exa:resichaincoatom}.
\begin{exa}
    Let $\CP$ be the language of weighted graphs, i.e. consist of a single binary relation symbol $E$. Let $A$ be any well-connected residuated lattice such that $\exists a,b\in A:a,b\not\geq 1$ and $a\not=b$. We define the models $M$ and $N$ with the same two element domain $\{s,t\}$ with interpretation of $E$:\begin{align*}
        E^M(s,s)=1 && E^M(t,t)=1 && E^N(s,s)=1 && E^N(t,t)=1\\
        E^M(s,t)=a && E^M(t,s)=a && E^N(s,t)=b && E^N(t,s)=a.
    \end{align*}

    Consider the sentence $\phi:=\forall x\forall y(xEy\und yEx)$. One can easily verify that $||\phi||^M=1$ and $||\phi||^N=(a\und b)\land (b\und a)<1$ (this is because $a\not= b$ and $x\und y\geq 1$ iff $x\leq y$ by the residuation property). Therefore $M\not\equiv_2 N$. 

    By contrast $M$ and $N$ satisfy the assumption in the Hanf theorem. For $r\in\omega$ let $B_M(r,s)$ be the $r$-sphere of $s$ in $M$ and $B_N(r,s)$ be the $r$-sphere of $s$ in $N$, similarly for $t$. Now, $B_M(r,s)=\{s\}=B_N(r,s)$ and $B_M(r,t)=\{t\}=B_N(r,t)$, and moreover the induced strong substructures are isomorphic given that $E^M$ and $E^N$ agree on $(s,s)$ and $(t,t)$. This means the $r$-sphere types of $s$ in $M$ and $N$ and $t$ in $M$ and $N$ are the same, and so for any sphere type $\iota\ M$ and $N$ have the same number of elements that $r$-realise $\iota$.
\end{exa}

There are natural alternatives to the modelling based Gaifman distance metric and this motivates the question of whether we might recover Hanf locality for some appropriate metric.
\begin{defi}
    Let $A$ be a well-connected residuated lattice and $t\in A$. For a $\CP$-model we define its $t_\geq$-Gaifman graph as the graph with vertex set $M$ and $mEn$ iff there is an $R\in\CP$ and $\bar{s}\in M:m,n\in\bar{s}$ and $||R(\bar{s})||^M\geq t$. We let $d_{\geq t}(x,y)$ denote the standard distance metric on the $t_\geq$-Gaifman graph of $M$. We define the corresponding $t$-sphere $B_t(r,\bar{m})$ and $t,r$-sphere type of $\bar{m}$ analogously.

    We can repeat this again for the $t_>$-graph and distance metric defined $d_{>t}(x,y)$ with $>$ in place of $\geq$, denoted the resulting strict $t$-spheres by $B_{t'}(r,\bar{m})$. We refer to these distance metrics as the (strict) threshold distance metrics. When the specific distance metric $d_{\geq t}$ or $d_\{>t\}$ is clear we drop reference to $t$ and simply write $d(x,y)$ and $B(r,\bar{m})$. 
\end{defi}

Minor adjustments to the previous example provide counterexamples to a potential Hanf theorem based on almost all these distance metrics whenever there are distinct elements beneath the threshold. The failure in each case is simply that distance in neighbourhoods only contains information about the values of $R^M$ above the threshold. The exception is therefore when $A$ is bounded and we use the strict distance metric based on the lower bound - $d_{>\bot}(x,y)$.\footnote{This is the version of distance employed in both \cite{BiziereGradelNaaf23:Localitysemiring} and \cite{WildSchroderPattinsonKonig18:fuzzyvanBenthem}. It coincides with the modelling based metric for well-connected bounded residuated lattice where $b\geq 1$ iff $b>\bot$.} In this case elements being non-adjacent in the Gaifman graph exactly specifies all $R^M$ values involving said elements - they must take value $\bot$. Accordingly, we can recover a Hanf theorem result. Really what this behaviour allows is the recovery of a technical lemma establishing that under certain conditions we can combine the isomorphisms of local neighbourhoods based on the strict $\bot$ metric. 
\begin{lem}
    Let $M,N$ be $\CP$-models over a well-connected bounded residuated lattice $A$. Let $r\in\omega, \bar{m}\cup\{m\}\subseteq M$ and $\bar{n}\cup\{n\}\subseteq N$. Suppose that:\begin{enumerate}[label=\roman*.]
        \item $m\not\in B(2r+1,\bar{m})$,
        \item $n\not\in B(2r+1,\bar{n})$,
        \item $B(r,\bar{m})\isom B(r,\bar{n})$
        \item $B(r,m)\isom B(r,n)$.
    \end{enumerate}
    Then $B(r,\bar{m}m)\isom B(r,\bar{n}n)$. 
\end{lem}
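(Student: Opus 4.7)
The plan is to decompose $B(r,\bar{m}m)=B(r,\bar{m})\cup B(r,m)$, observe that the two pieces are essentially disjoint under hypotheses $(i)$ and $(ii)$, and then glue together the given isomorphisms $f\colon B(r,\bar{m})\to B(r,\bar{n})$ (with $f(\bar{m})=\bar{n}$) and $g\colon B(r,m)\to B(r,n)$ (with $g(m)=n$) to form the required isomorphism $f\cup g\colon B(r,\bar{m}m)\to B(r,\bar{n}n)$ sending $\bar{m}m$ to $\bar{n}n$.

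First I would establish disjointness (in the constant-free case): any $x\in B(r,\bar{m})\cap B(r,m)$ would satisfy $d(x,m)\leq r$ and $d(x,m')\leq r$ for some $m'\in\bar{m}$, giving $d(m,m')\leq 2r$, hence $m\in B(2r+1,\bar{m})$, contradicting $(i)$. The symmetric argument using $(ii)$ yields $B(r,\bar{n})\cap B(r,n)=\varnothing$, so $f\cup g$ is a well-defined bijection between the two unions.

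The main obstacle, and the only step that genuinely exploits the strict $\bot$-threshold distance, is verifying preservation of the interpretation of relation symbols on \emph{straddling} tuples $\bar{s}$ containing components in both $B(r,\bar{m})\setminus B(r,m)$ and $B(r,m)\setminus B(r,\bar{m})$. For such a tuple I would argue $R^M(\bar{s})=\bot$ for every $R\in\CP$: if $R^M(\bar{s})>\bot$, then in the strict $\bot$ Gaifman graph all components of $\bar{s}$ are pairwise adjacent, so picking $x\in\bar{s}\cap B(r,\bar{m})$ and $y\in\bar{s}\cap B(r,m)$ gives $d(x,y)\leq 1$; combined with $d(x,m')\leq r$ for some $m'\in\bar{m}$ and $d(y,m)\leq r$ this forces $d(m,m')\leq 2r+1$, contradicting $(i)$. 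Applying the symmetric argument to $R^N$ and the image tuple $(f\cup g)(\bar{s})\subseteq B(r,\bar{n}n)$ using $(ii)$ yields $R^N((f\cup g)(\bar{s}))=\bot$, so preservation on straddling tuples holds trivially with value $\bot$ on both sides.

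For non-straddling tuples, preservation of $R$-values is immediate from $f$ or $g$ being an isomorphism on its respective piece, so together with the previous step $f\cup g$ is a strong homomorphism and therefore the required isomorphism. When $\CP$ contains constants the intersection $B(r,\bar{m})\cap B(r,m)$ only collapses to the common core $B(r,\varnothing)$ rather than being empty, and one needs an additional bookkeeping step to arrange that $f$ and $g$ agree on the core before taking the union; I expect this to be routine once the straddling-tuple argument is in place, since both $f$ and $g$ already restrict to isomorphisms $B(r,\varnothing)^M\to B(r,\varnothing)^N$ fixing all $c^M\mapsto c^N$.
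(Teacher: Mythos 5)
Your treatment of the constant-free case is correct and is essentially the paper's own proof: the same decomposition $B(r,\bar{m}m)=B(r,\bar{m})\cup B(r,m)$, disjointness of the pieces from hypotheses (i)--(ii), and the key observation that any tuple meeting both pieces must take value $\bot$ under every $R\in\CP$ on both the $M$-side and the $N$-side (the paper organises the case split by whether $R^M(\bar{s})>\bot$ or $R^N(\pi^+(\bar{s}))>\bot$ rather than by straddling versus non-straddling tuples, but the content is the same).

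The gap is in the constant case, which you defer as routine bookkeeping. The reason you give (both $f$ and $g$ restrict to isomorphisms of the common core sending $c^M\mapsto c^N$) does not yield the agreement you want: two isomorphisms of the core need not coincide, and ``arranging'' agreement is exactly the step that needs an argument. If you resolve the conflict by using $f$ on $B(r,\varnothing)$ and $g$ only on $\{t\in M:d(t,m)\leq r\}$, then a tuple lying inside $B(r,m)$ with some components in the core and some outside it is non-straddling under your definition, yet its image is no longer computed by $g$ alone, so preservation of its $R$-value is not ``immediate from $g$ being an isomorphism''. The paper's fix is precisely this: restrict $g$ to $\{t\in M:d(t,m)\leq r\}$, note that hypothesis (i) forces this set to be disjoint from $B(r,\varnothing)$ (a point $t$ with $d(t,m)\leq r$ and $d(t,c^M)\leq r$ would put $m\in B(2r+1,\bar{m})$), and then rerun your $\bot$-valued cross-tuple argument for the pair of disjoint sets $B(r,\bar{m})$ and $\{t\in M:d(t,m)\leq r\}$ (and the corresponding pair in $N$ using (ii)), which is exactly what covers the mixed tuples your case split misses. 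With that adjustment your argument is complete; as written, the constants case is asserted rather than proved.
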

\begin{proof}
    For the moment suppose that $\CP$ contains no constant symbols. Letting $\pi$ and $\pi'$ be the respective isomorphisms we consider the map $\pi^+:=\pi\cup\pi'$. Conditions $i.$ and $ii.$ imply that $B(r,\bar{m})\cap B(r,m)=\varnothing=B(r,\bar{n})\cap B(r,n)$ so this is a well-defined function and a bijection, it remains to check it is an isomorphism. 
    
    We first note that for any $s\in B(r,\bar{m})$ and $t\in B(r,m)$, if $d_{>\bot}(s,t)\leq 1$ then $d_{>\bot}(\bar{m},m)\leq 2r+1$ and $m\in B(2r+1,\bar{m})$ which is a contradiction. Therefore $d_{>\bot}(s,t)>1$ and in particular for any $s,t\in B(r,\bar{m}m)$ if $d_{>\bot}(s,t)\leq 1$ then either $s,t\in B(r,\bar{m})$ or $s,t\in B(r,m)$. Similarly, for any $s,t\in B(r,\bar{n}n)$ if $d_{>\bot}(s,t)\leq 1$ then either $s,t\in B(r,\bar{n})$ or $s,t\in B(r,n)$.
    
    So now let $R\in\CP$ and $\bar{s}\in B(r,\bar{m}m)$. We have three cases: $R^M(\bar{s})>\bot$, $R^N(\pi^+(\bar{s}))>\bot$ or both $R^M(\bar{s})=\bot=R^N(\pi^+(\bar{s}))$. Our check for the final case is immediate, so suppose $R^M(\bar{s})>\bot$. Then for all $i,j\ d_{>\bot}(s_i,s_j)\leq 1$ and so either $\bar{s}\in B(r,\bar{m})$, and then \[R^{B(r,\bar{m}m)}(\bar{s})=R^{B(r,\bar{m})}(\bar{s})=R^{B(r,\bar{n})}(\pi(\bar{s}))=R^{B(r,\bar{n}n)}(\pi^+(\bar{s})),\] or $\bar{s}\in B(r,m)$ and similarly \[R^{B(r,\bar{m}m)}(\bar{s})=R^{B(r,m)}(\bar{s})=R^{B(r,n)}(\pi'(\bar{s}))=R^{B(r,\bar{n}n)}(\pi^+(\bar{s})).\]
    If $R^N(\pi^+(\bar{s}))>0$ then for all $i,j\ d_{>\bot}(\pi^+(s_i),\pi^+(s_j))\leq 1$ and either $\pi^+(\bar{s})\in B(r,\bar{n})$ or $\pi^+(\bar{s})\in B(r,n)$ and we can make the same identities. 

    Now we consider the case where we allow constant symbols. In this case, $B(r,\bar{m})\cap B(r,m)=B(r,\varnothing)$. The potential problem with combining our isomorphisms is they may conflict on the common core. However, the assumption that $m\not\in B(2r+1,\bar{m})$ further implies that $\{t\in M:d_{>\bot}(t,m)\leq r\}\cap B(r,\varnothing)=\varnothing$ and thus we may 'break' in favour of the isomorphism for $B(r,\bar{m})$. More precisely, we let $\pi$ be the isomorphism between $B(r,\bar{m})$ and $B(r,\bar{n})$, $\pi'$ be the restriction of the isomorphism between $B(r,m)$ and $B(r,n)$ onto the sets $\{t\in M:d_{>\bot}(t,m)\leq r\}$ and $\{t\in N:d_{>\bot}(t,n)\leq r\}$ and consider the map $\pi^+=\pi\cup\pi'$. This is a well-defined bijection due to acting on disjoint sets and we again must check it is an isomorphism. For constant symbols this is immediate as $\pi^+(c^{B(r,\bar{m}m)})=\pi(c^{B(r,\bar{m})})=c^{B(r,\bar{n})}=c^{B(r,\bar{n}n)}$, and for relation symbols this is just as before except we work with the pairs of disjoint sets $B(r,\bar{m})$, $\{t\in M:d_{>\bot}(t,m)\leq r\}$ and $B(r,\bar{n})$, $\{t\in N:d_{>\bot}(t,n)\leq r\}$. 
\end{proof}

\begin{thm}[Hanf's Theorem for Residuated Lattices]\label{thm:Hanf}
    Let $M,N$ be finite $\CP$-models over a well-connected residuated bounded lattice $A$, $d$ be the strict $\bot$-threshold distance and $k\in\omega$. We define the sequence $\langle r_j\rangle_{j\leq k}$ by $r_j=(3^j-1)/2$. Let $e$ be the maximum size of the $r_k$-spheres of each element in $M$ and $N$ and suppose that for each $r\leq r_k$ and isomorphism type $\iota$ either $i.$ or $ii.$ holds:\begin{enumerate}[label=\roman*.]
        \item $M$ and $N$ have the same number of elements that $r$-realise $\iota$.
        \item Both $M$ and $N$ have more than $k\cdot e$ elements that $r$-realise $\iota$. 
    \end{enumerate}
    Then $M\equiv^s_k N$.
\end{thm}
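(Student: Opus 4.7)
\emph{Proof plan.} By Corollary~\ref{cor:standardEF} it suffices to exhibit a back-and-forth sequence $\langle I_j \rangle_{j \leq k}$ witnessing $M \isom_k N$. Following the classical Hanf strategy, I would let $I_j$ consist of the partial maps $\bar{m} \mapsto \bar{n}$ of length at most $k - j$ such that $B(r_j, \bar{m}), \bar{m} \isom B(r_j, \bar{n}), \bar{n}$ as $\CP\cup\{c_1,\ldots,c_s\}$-models. Each such map is automatically a partial isomorphism since the $r_j$-sphere records all atomic data on $\bar{m}$, and the chain is nested since an isomorphism of $r_{j+1}$-spheres restricts to one of $r_j$-spheres around the same points. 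For non-emptiness of $I_k$, applied to the empty tuple the hypothesis at $r = r_k$ forces $B(r_k, \varnothing)_M \isom B(r_k, \varnothing)_N$, so the empty map belongs to $I_k$.

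The back condition is symmetric, so I focus on the forth condition. Take $\bar{m} \mapsto \bar{n} \in I_{j+1}$ with $|\bar{m}| \leq k - j - 1$ and some $m \in M$. If $m \in B(2r_j+1, \bar{m})$, the triangle inequality together with $r_{j+1} = 3r_j + 1$ gives $B(r_j, \bar{m}m) \subseteq B(r_{j+1}, \bar{m})$, so restricting the existing isomorphism of $r_{j+1}$-spheres produces an $n \in N$ with $B(r_j, \bar{m}m), \bar{m}m \isom B(r_j, \bar{n}n), \bar{n}n$. If instead $m \notin B(2r_j+1, \bar{m})$, let $\iota$ be the $r_j$-sphere type of $m$. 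Any $m' \in B(2r_j+1, \bar{m})$ has $B(r_j, m') \subseteq B(r_{j+1}, \bar{m})$, so its $r_j$-sphere type computed in $M$ coincides with the one computed inside the substructure $B(r_{j+1}, \bar{m})$; transporting along the given isomorphism $B(r_{j+1}, \bar{m}) \isom B(r_{j+1}, \bar{n})$ shows that the numbers of $r_j$-realisers of $\iota$ inside $B(2r_j+1, \bar{m})$ and inside $B(2r_j+1, \bar{n})$ coincide. The hypothesis then supplies an $r_j$-realiser $n$ of $\iota$ in $N$ outside $B(2r_j+1, \bar{n})$: either the global counts of $r_j$-realisers in $M$ and $N$ agree, and $m$ is such an external realiser in $M$ forcing one in $N$, or both global counts exceed $k \cdot e$ while $|B(2r_j+1, \bar{n})| \leq |\bar{n}| \cdot e \leq k \cdot e$, again leaving an external realiser in $N$. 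In either case the preceding combining lemma applied to $\bar{m}, m$ and $\bar{n}, n$ yields $B(r_j, \bar{m}m), \bar{m}m \isom B(r_j, \bar{n}n), \bar{n}n$, placing $\bar{m}m \mapsto \bar{n}n$ in $I_j$.

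The principal technical point I expect is Case B of the forth step: the counting argument for producing a realiser outside the forbidden ball hinges on the ambient $r_{j+1}$-isomorphism preserving $r_j$-sphere types of elements safely interior to $B(r_{j+1}, \bar{m})$, and the recurrence $r_{j+1} = 3r_j + 1$ is chosen precisely so that $2r_j + 1 \leq r_{j+1}$, making this preservation valid. Everything residuated-lattice specific has been isolated into the previous combining lemma and into the passage to the standard expansion via Corollary~\ref{cor:standardEF}; the rest of the argument is purely combinatorial, just as in the classical Hanf theorem. With the sequence $\langle I_j \rangle_{j \leq k}$ in hand, Corollary~\ref{cor:standardEF} delivers $M \equiv^s_k N$.
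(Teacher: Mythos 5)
Your proposal is correct and follows essentially the same route as the paper: the same definition of $I_j$ via isomorphisms of $r_j$-spheres with $|\bar m|\leq k-j$, the same base case using the hypothesis at $r_k$, the same two-case forth argument (restriction of the ambient isomorphism when $m\in B(2r_j+1,\bar m)$, and the sphere-type counting argument plus the preceding combining lemma otherwise), concluding via the back-and-forth machinery of Corollary~\ref{cor:standardEF}.
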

\begin{proof} 
    We are able to essentially follow the textbook proof for the classical case (e.g. Ebbinghaus and Flum \cite[Theorem 1.4.1]{EbbinghausFlum95:Finmodeltheory} or \cite[Theorem 4.24]{Libkin04:Elementsfinmod}). We show that $\langle I_j\rangle_{j\leq k}$ is a b\&f-system from $M$ and $N$ where we define:\[I_j:=\{\bar{m}\mapsto\bar{n}\text{ p.iso for }M,N:(B(r_j,\bar{m}),\bar{m})\isom (B(r_j,\bar{n}),\bar{n})\text{ and }|\bar{m}|\leq k-j\}.\]
    
    First, observe that we can equivalently define the sequence $r_j$ by $r_j=0$ and $r_{j+1}=3r_j+1$. For $j=k$ we have $\bar{m}=\varnothing$ and take $I_k=\{\varnothing\mapsto\varnothing\}$ as a trivial partial isomorphism. This still requires checking that $B(r_k,\varnothing)$ as defined in $M$ is isomorphic to $B(r_k,\varnothing)$ as defined in $N$. This follows easily by our assumption, if we let any $m\in M$ and $\iota$ be its $r_k$-sphere type, $\exists n\in N$ with $r_k$-sphere type $\iota$ and so $B(r_k,m)\isom B(r_k,n)$. These spheres contain the local neighbourhoods of the interpreted constant symbols in their respective models, so we can restrict our isomorphism onto those neighbourhoods.     

    Now we turn to the back and forth conditions. By symmetry it is enough to prove the forth condition. Accordingly, let $0\leq j<k,m\in M$ and $\bar{m}\mapsto\bar{n}\in I_{j+1}$ witnessed by say $\pi$. 

    Case $1$: $m\in B(2r_j+1,\bar{m})$. In this case $B(r_j,\bar{m}m)\subseteq B(r_{j+1},\bar{m})$, as for any $t\in B(r_j,\bar{m}m)$ either $t\in B(r_j,\bar{m})\subseteq B(r_{j+1},\bar{m})$ or $d(t,m)\leq r_j$ and then there is some $x\in \bar{m}\cup\{c^M:c\in\CP\}:d(t,x)\leq d(t,m)+d(m,x)\leq r_j+2r_j+1=3r_j+1=r_{j+1}$). Therefore we can consider $n=\pi(m)$ and $\pi$ restricts to an isomorphism on $B(r_j,\bar{m}m)\isom B(r_j,\bar{n}n)$. Then $\bar{m}m\mapsto\bar{n}n\in I_j$ as required. 

    Case $2$: $m\not\in B(2r_j+1,\bar{m})$. We let $\iota$ be the $r_j$-sphere type of $m$. By the existence of $\pi$ we claim that $B(2r_j+1,\bar{m})$ and $B(2r_j+1,\bar{n})$ have the same number of elements of $r_j$-sphere type $\iota$. For any $s\in B(2r_j+1,\bar{m})\ B(r_j,t)\subseteq B(3r_j+1,\bar{m})=B(r_{j+1},\bar{m})$ and therefore $B(r_j,s),s\isom \pi[B(r_j,s),s]=B(r_j,\pi(s)),\pi(s)$. Thus $s$ and $\pi(s)$ have the same $r_j$-sphere type and the injectivity of $\pi$ implies the number of elements of $r_j$-sphere type $\iota$ in $B(2r_j+1,\bar{m})$ is less than or equal to that number in $B(2r_j+1,\bar{n})$. By symmetry with $\pi^{-1}$ we have equality. Now, if $i$. holds for $\iota$ as $m\in M\setminus B(2r_j+1,\bar{m})$ and is of $r_j$-sphere type $\iota$ there must exist an $n\in N\setminus B(2r_j+1,\bar{n})$ of $r_j$-sphere type $\iota$. 
    
    In the case that $ii.$ holds for $\iota$, we note that the size of $B(2r_j+1,m_i)$ is $\leq e$ by assumption and so $|B(2r_j+1,m)|\leq$ length$(\bar{m})\cdot e\leq k\cdot e$. In particular, the number of elements of $r_j$-sphere type $\iota$ in $B(2r_j+1,\bar{m})$ and $B(2r_j+1,\bar{n})$ is at most $k\cdot e$ and therefore $\exists n\in N\setminus B(2r_j+1,\bar{n})$ of $r_j$-sphere type $\iota$.  
    
    In either case we have found $n\not\in B(2r_j+1,\bar{n})$ with $B(r_j,m)\isom B(r_j,n)$ and we may apply our previous lemma to conclude that $B(r_j,\bar{m}m),\bar{m}m\isom B(r_j,\bar{n}n),\bar{n}n$ witnessing that $\bar{m}m\mapsto\bar{n}n\in I_j$.
\end{proof}

The fact we can follow the classical proof of Hanf's theorem essentially unmodified emphasises that the real interest is in the condition within the theorem. We will return to this idea in more detail in Section~\ref{sec:queries}. For now, as an immediate first step we note that it implies another form of locality, due to Fagin, Stockmeyer and Vardi \cite{FaginStockmeyerVardi94:monadicNPcoNP}.
\begin{defi}
    Let $A$ be a well-connected residuated lattice. Let $M,N$ be finite $\CP$-models over $A$. We write $M\leftrightarrows_{r} N$ iff for each isomorphism type $\tau$ of $\CP\cup\{c\}$-models the number of elements of $M$ and $N$ with $r$-sphere type $\tau$ is the same. 
\end{defi}
\begin{rem}
    By remark \ref{rem:isotypes} this condition is equivalent to there being a bijection $f\colon M\rightarrow N$ such that for all $m\in M\ B(r,\bar{m}),\bar{m}\isom B(r,f(\bar{m})),f(\bar{m})$.
\end{rem}
 
\begin{cor}\label{cor:Hanf}
    Let $A$ be a well-connected residuated lattice. Let $M,N$ be finite $\CP$-models over $A$ and $k\in\omega$. Suppose $M,\bar{m}\leftrightarrows_{3^k} N,\bar{n}$. Then $M,\bar{m}\equiv^s_k N,\bar{n}$. 
\end{cor}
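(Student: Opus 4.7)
The plan is to apply Theorem~\ref{thm:Hanf} directly to the augmented $\CP\cup\{c_1,\dots,c_s\}$-models $(M,\bar m)$ and $(N,\bar n)$, where each $c_i$ is interpreted as $m_i$ in the first and as $n_i$ in the second. All I need to verify is that the bijection hypothesis implies condition~$i.$ of that theorem at every radius up to $r_k$.

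First I would observe that the recurrence $r_0=0$, $r_{j+1}=3r_j+1$ solves to $r_k=(3^k-1)/2$, so in particular $r_k+1\leq 3^k$. It follows that for every $m$ and every $r\leq r_k$, any Gaifman-graph path of length $\leq r$ starting at $m$ visits only elements of $B(3^k,m)$, and each relation tuple witnessing such a path edge lies inside $B(3^k,m)$. By the remark following the definition of $\leftrightarrows_r$, the hypothesis $M,\bar m\leftrightarrows_{3^k} N,\bar n$ supplies a bijection $f\colon M\to N$ together with an isomorphism $\pi_m\colon(B(3^k,m),m)\isom(B(3^k,f(m)),f(m))$ in the augmented language for each $m\in M$.

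Next I would argue that each $\pi_m$ restricts to an isomorphism $(B(r,m),m)\isom(B(r,f(m)),f(m))$ for every $r\leq r_k$. Since $\pi_m$ is a strong-homomorphism between the induced substructures, it preserves Gaifman-graph adjacency internal to $B(3^k,m)$, hence preserves distances within that substructure. The slack noted above means that distances from $m$ measured inside $B(3^k,m)$ coincide with distances in $M$ whenever the endpoint lies in $B(r,m)$, and analogously on the $N$ side; so $\pi_m$ carries $B(r,m)$ bijectively onto $B(r,f(m))$ and restricts to the desired isomorphism of strong substructures together with the interpretation of the constants. Consequently, for each $r\leq r_k$ and each sphere type $\iota$, the bijection $f$ restricts to a bijection between the $r$-realisers of $\iota$ in $M$ and in $N$, so their counts agree.

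This is exactly clause~$i.$ of Theorem~\ref{thm:Hanf} at every radius $r\leq r_k$ and every sphere type $\iota$, and one concludes $M,\bar m\equiv^s_k N,\bar n$. The one place needing care is the restriction-of-isomorphism step: one must exploit the inequality $r_k<3^k$ to ensure that the relation tuples witnessing short edges lie wholly inside the larger sphere, so that the induced substructure faithfully records the ambient Gaifman distances on $B(r,m)$. Once that observation is in hand, the rest is a direct unwinding of definitions followed by invocation of the preceding theorem.
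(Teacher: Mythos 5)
Your proposal is correct and follows essentially the same route as the paper: verify clause \emph{i.} of Theorem~\ref{thm:Hanf} at every radius $r\leq r_k$ by showing that the hypothesised isomorphisms of the $3^k$-spheres restrict to isomorphisms of all smaller spheres, and then invoke that theorem for the models augmented with constants for $\bar m,\bar n$. The only difference is one of detail: the paper dismisses the restriction step with ``two spheres being isomorphic certainly implies smaller spheres are,'' whereas you explicitly justify it via the observation that witnessing tuples of short paths lie inside the larger sphere, so internal and ambient distances agree up to radius $r_k<3^k$ — a worthwhile elaboration, but the same argument.
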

\begin{proof}
    We aim to show $M,\bar{m}$ and $N,\bar{n}$ satisfy the Hanf theorem criterion. By assumption, for each isomorphism type $\tau\ M$ and $N$ have the same number of elements that $3^k$-sphere realise $\tau$. Moreover $r_k=(3^k-1/2)\leq 3^k$, so the corollary follows upon checking that if $M,\bar{m}$ and $N,\bar{n}$ have the same number of elements of $r$-sphere type $\tau$ then they also have the same number of elements of $l$-sphere type $\iota$ for all $l\leq r$. But this is clear; two elements have the same $r$-sphere type iff $B(r,x)\isom B(r,y)$ and have the same $l$-sphere type iff $B(l,x)\isom B(l,y)$ and two spheres being isomorphic certainly implies smaller spheres are.  
\end{proof}
\begin{rem}
    We use the lax bound of $3^k$ for cleanness, we could of course state this corollary using the tighter bound for Hanf locality of $(3^k-1)/2$ in line with theorem \ref{thm:Hanf}.
\end{rem}


\section{Gaifman Locality}\label{sec:Gaif}
We now want to follow the pattern of Hanf locality with Gaifman locality; set up the many-valued counterparts to the key concepts of the proof and demonstrate the resulting theorem goes through as it does classically. There are however some important differences. Firstly, we will need to recover some additional syntactic machinery which puts us into a more restricted setting than our quite general one for Hanf. We will introduce these restrictions alongside the machinery they are required for. Secondly, we do not recover the usual classical Gaifman theorem. In the classical setting, a common proof of Gaifman's theorem (for both finite and infinite models) proceeds via a main lemma to which the theorem comes as a corollary \cite[Theorem 1.5.1]{EbbinghausFlum95:Finmodeltheory}. This implication relies on compactness. 
\begin{propC}[\cite{EbbinghausFlum95:Finmodeltheory}]
    Let $\Phi$ be a set of first-order classical sentences. Assume that any two structures that satisfy the same sentences of $\Phi$ are elementary equivalent. Then any first-order sentence is equivalent to a boolean combination of sentences of $\Phi$.
\end{propC}
\begin{proof}[Proof sketch]
    Take an arbitrary (classical) structure $A$ and consider \[\Phi(A):=\{\psi\in\Phi:A\models\psi\}\cup\{\neg\psi:\psi\in\Phi, A\models\neg\psi\}.\]
    It follows from the assumption on $\Phi$ that for any model $B$, $B\models\phi$ iff there is some model $A$ such that $A\models\phi$ and $B\models\Phi(A)$. In particular, for each $A\models\phi\ \Phi(A)\models\phi$, applying compactness there is a finite set of sentences $\Phi_0(A)$ such that $\Phi_0(A)\models\phi$. Moreover, for \emph{any} finite collection $\{A_i:A_i\models\phi\}$ \[\biglor\limits_{A_i}\bigland\Phi_0(A_i)\models\phi.\]
    Suppose (for contradiction) that there is no finite collection $\{A_i:A_i\models\phi\}$ such that the reverse holds. Then another application of compactness yields a model $B$ such that $B\models\phi$ but for each $A$ such that $A\models\phi\ B\not\models\Phi(A)$, contradicting our equivalence. Therefore this is some finite collection $\{A_i\}$ for which the reverse holds and $\phi$ is equivalent to the sentence: \[\psi\coloneqq \biglor_{A_i}\bigland\Phi_0(A_i).\qedhere\]
\end{proof}

The status of compactness for non-classical models is a significant area of research in its own right. As a brief mention of some positive results, compactness has been recovered for models defined over a fixed finite MTL-algebra \cite[Theorem 4.4]{Dellunde13:FuzzyUltra} and models defined over the standard $[0,1]$ based \L ukasiewicz and G\"{o}del algebras, but fails for the $[0,1]$ product algebra \cite{Hajek02:Technical}. Here our restriction to finite models proves significant. Just as compactness fails for finite classical models \cite[Chapter 0]{EbbinghausFlum95:Finmodeltheory} the same formulas witness its failure for finite models defined over well-connected residuated lattices. 

The main lemma in the proof of Gaifman locality, which we refer to as Gaifman's lemma, is a proof that there is a set of sentences defined using only local information that satisfies the hypothesis in the preceding proposition. It is this lemma that we generalise into the abstract setting. This is still of significant interest, as we will see later it is enough to recover the application to definable queries. 
\begin{thm}[Gaifman's Lemma (Classical)]
    Suppose $A$ and $B$ satisfy the same basic local sentences. Then $A\equiv B$.
\end{thm}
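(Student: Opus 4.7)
The plan is to derive this classical Gaifman Lemma from the Hanf-style result already established as Theorem~\ref{thm:Hanf}, exploiting that basic local sentences are precisely what is needed to enforce the numerical matching of local isomorphism types required by Hanf's criterion. Since elementary equivalence is captured by $A\equiv_k B$ holding for every $k\in\omega$, it suffices to show that for each such $k$ the hypotheses of Theorem~\ref{thm:Hanf} (at the top radius $r_k$) can be extracted from the assumed agreement of $A$ and $B$ on basic local sentences.

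Concretely, I would fix $k$, recall the sequence $\langle r_j\rangle_{j\leq k}$ and sphere-size bound $e$ from Hanf's theorem, and for each $r\leq r_k$ and each isomorphism type $\tau$ of a possible $r$-sphere produce an $r$-local formula $\psi^{(r)}_\tau(x)$ (built by relativising quantifiers to the $r$-ball of $x$) that holds of an element precisely when its $r$-sphere realises $\tau$. From these, for each threshold $m \leq k\cdot e + 1$, I would form the basic local sentences
\[ \phi_{\tau, r, m} \;:=\; \exists x_1\cdots \exists x_m \Bigl( \bigland_{i<j} d(x_i, x_j)>2r \;\wedge\; \bigland_i \psi^{(r)}_\tau(x_i) \Bigr), \]
asserting the existence of at least $m$ pairwise $2r$-separated elements of $r$-sphere type $\tau$.

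Agreement between $A$ and $B$ on all such $\phi_{\tau, r, m}$ then forces either that $A$ and $B$ have the exact same count of elements $r$-realising $\tau$ (whenever that count does not exceed $k\cdot e$) or that both count strictly more than $k\cdot e$ such elements. This is precisely the disjunction of clauses (i) and (ii) in Theorem~\ref{thm:Hanf}, so that theorem delivers $A\equiv_k B$; since $k$ was arbitrary, $A\equiv B$ follows.

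The main obstacle is the construction of the type-defining $r$-local formulas $\psi^{(r)}_\tau$. Classically the isomorphism type of the $r$-ball around a point is first-order definable by a relativised formula of bounded quantifier depth, so the $\phi_{\tau, r, m}$ are genuinely basic local and the argument is clean. The delicate point, and the place the proof will need revisiting when generalised to residuated lattices in the subsequent theorems, is that one must capture the $0$-isomorphism type of the local sphere by a formula whose \emph{valuation} interfaces correctly with the \emph{modelling} relation. This is exactly the juncture at which the order-interpreting connective emphasised in the introduction becomes indispensable, since it is what links truth-value-based encoding of sphere types to the modelling-based counting of witnesses that underwrites the reduction to Hanf locality.
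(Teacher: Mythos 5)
Your reduction hinges on the claim that agreement on the basic local sentences $\phi_{\tau,r,m}$ forces the Hanf counting condition for each radius $r$ and sphere type $\tau$ (same number of realizations, or both counts above $k\cdot e$). That step is a non sequitur, and in fact false as stated: a basic local sentence can only count \emph{pairwise $2r$-scattered} realizations of a type, so realizations that cluster within distance $2r$ of one another are invisible to it. For a concrete illustration of why the extraction fails, compare the path $a-b-c$ with the star on centre $b$ and leaves $a,c,d$: the $1$-sphere type ``leaf pointed at itself'' is realized twice in the first structure and three times in the second, yet in both structures any two realizations are at distance exactly $2$, so the maximal $2$-scattered set of realizations has size $1$ in each, and every sentence $\phi_{\tau,1,m}$ takes the same value in both. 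Hence the family $\phi_{\tau,r,m}$ does not determine the count of elements realizing $\tau$ at radius $r$, and your argument supplies no mechanism (e.g.\ via larger radii) for recovering those counts; supplying one would amount to a substantially different and nontrivial proof, and it is not clear it can be done without circularity. There is also a scope mismatch: the Hanf theorem you invoke is stated for finite models and its threshold uses the maximal sphere size $e$, whereas the classical Gaifman lemma is needed (and is used in the paper, via compactness) for arbitrary structures, where $e$ need not exist.

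The paper does not reprove the classical lemma but cites the textbook argument, and its own generalisation (Theorem~\ref{thm:GaifLem}) follows that argument's shape: one builds a back-and-forth system $\langle I_j\rangle_{j\leq k}$ directly, with radii $7^j$ and the invariant that the pointed $7^j$-neighbourhoods of the matched tuples are $g(j)$-equivalent; basic local sentences enter only at the single point where a new element lies far from the current tuple, and there one compares maximal sizes of scattered sets of a given local type inside the neighbourhood versus in the whole structure. That is exactly the situation in which scattered-set counting \emph{is} the right information, which is why the game-theoretic route succeeds where the attempted reduction to Hanf's counting hypothesis does not.
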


Basic local sentences are a particular syntactic form based on local formula and it is this we need to form an abstracted version of. This is enabled by another syntactic encoding, this time of the distance metric. We define the formula $\theta_r(x,y)$ inductively ($a(R)$ is the arity of the relation symbol $R\in\CP$):
\begin{align*}
    \theta_0(x,y)&:=x=y \\
    \theta_1(x,y)&:=\biglor\limits_{R\in\CP}\exists u_1...\exists u_{a(R)}[R(u_1,...,u_{a(R)})\land\biglor\limits_{1\leq i,j\leq a(R)}(u_i=x\land u_j=y)] \\
    \theta_{r+1}(x,y)&:=\exists z(d_k(x,z)\land d_1(z,y)).
\end{align*}

When applied to finite models over a well-connected residuated lattice $A$ these formulas capture a refined measure of distance. For any $\CP$-model over $A$ we define the Gaifman matrix $G_v(M)$ as the $|M|\times |M|$ matrix with components: \[v_{m,n}:=\sup\{R^M(\bar{c})\in A:R\in \CP, m,n\in\bar{c}\},\] that is the supremum of values taken by atomic formula in which $m$ and $n$ occur.

A path through $G_v(M)$ is a sequence of elements $M$ and for a path of length $r-1$, its label is the $r$-sequence $\langle v_{m_i,n_i}\rangle_{i\leq r}$ and its weight is $\inf\{v_{m_i,n_i}\in A:1\leq i\leq r\}$. It is straightforward to verify that $||\theta_r(m,n)||^M$ is the supremum of the weight of paths through $G_v(M)$ from $m$ to $n$ of length at most $r$ and as corollary to this that it encodes the distance in the na\"{i}ve Gaifman graph. 

\begin{cor}
    Let $A$ be a well-connected residuated lattice, $M$ a finite $\CP$-model over $A$, $m,n\in M$ and $r\in\omega$. \[M\models \theta_r(m,n)\text{ iff }||\theta_r(m,n)||^M\geq 1\text{ iff }d(m,n)\leq r\text{ iff } n\in B(r,m).\] 
\end{cor}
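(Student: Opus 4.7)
My plan is to reduce the statement to a short induction on $r$, after observing that the first and last ``iff'' are simply unpackings of definitions. The equivalence $M\models\theta_r(m,n) \iff ||\theta_r(m,n)||^M \geq 1$ is the definition of the modelling relation, and $d(m,n) \leq r \iff n\in B_a(r,m)$ is the definition of the $r$-sphere. All of the content therefore lies in the middle equivalence $||\theta_r(m,n)||^M \geq 1 \iff d(m,n)\leq r$.

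For the base case $r=0$ the formula $\theta_0(m,n)$ is $m=n$, which by the crispness of equality evaluates to $1$ iff $m=n$, equivalently iff $d(m,n) = 0$. For the inductive step I would use the recursive clause $\theta_{r+1}(x,y) := \exists z\,(\theta_r(x,z) \land \theta_1(z,y))$, whose valuation in the finite model $M$ expands to
\[
||\theta_{r+1}(m,n)||^M \;=\; \sup_{z\in M}\bigl(||\theta_r(m,z)||^M \land ||\theta_1(z,n)||^M\bigr),
\]
a finite supremum. Well-connectedness of $A$ extends to finite joins, so this value is $\geq 1$ iff some particular $z$ makes the bracketed term $\geq 1$; the identity $a \land b \geq 1$ iff $a \geq 1$ and $b \geq 1$ recorded in the preliminaries then splits this into the pair of conditions $||\theta_r(m,z)||^M \geq 1$ and $||\theta_1(z,n)||^M \geq 1$. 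The first is equivalent to $d(m,z)\leq r$ by induction hypothesis.

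For the second I would unpack the definition of $\theta_1$, which is a finite disjunction of existentials over atomic formulas. Iterating the same combination of stability laws for $\lor$, $\land$ and $\exists$ around $\geq 1$ yields that $||\theta_1(z,n)||^M \geq 1$ iff there exist $R\in\CP$ and a tuple $\bar c$ containing both $z$ and $n$ with $R^M(\bar c)\geq 1$, i.e.\ iff $zEn$ in the Gaifman graph of $M$. Putting everything together, $||\theta_{r+1}(m,n)||^M \geq 1$ iff there exists $z\in M$ with $d(m,z)\leq r$ and $zEn$, which is exactly $d(m,n)\leq r+1$.

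The only subtle point is the repeated passage from a supremum being $\geq 1$ to one of its arguments being $\geq 1$: this requires the suprema to range over finite index sets, which is guaranteed by the finiteness of $M$ together with the (assumed) finite relational signature used in forming $\theta_1$. Modulo this assumption the argument is a routine induction, and indeed one could equally derive the result directly from the preceding path-weight characterisation of $||\theta_r||^M$ by running the same well-connectedness translation at both the sup-over-paths and inf-along-a-path levels.
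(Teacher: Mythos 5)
Your argument is correct and amounts to the same computation the paper intends, just packaged differently. The paper first establishes the quantitative fact that $||\theta_r(m,n)||^M$ is the supremum of the weights of paths of length at most $r$ through the Gaifman matrix $G_v(M)$, and then reads the corollary off from that; you never compute the value at all, instead pushing the single threshold condition $\geq 1$ through a direct induction on $r$, using well-connectedness (extended to finite joins, which finiteness of $M$ and of the relational signature licenses) to descend through $\exists$ and $\lor$, and the lattice stability of $\land$ at $\geq 1$ to split conjunctions. The two routes are interchangeable, as you yourself note at the end; yours is marginally more economical because it avoids proving the stronger statement about exact values, while the paper's quantitative characterisation is also what feeds the subsequent corollary for the threshold metrics $d_{\geq a}$, $d_{>a}$ over chains, where one needs the value itself and not merely its comparison with $1$.

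One caveat on your final step: ``there exists $z$ with $d(m,z)\leq r$ and $zEn$, which is exactly $d(m,n)\leq r+1$'' is not literally an equivalence when $m=n$ and $m$ is isolated in the Gaifman graph, i.e.\ occurs in no atomic tuple of value $\geq 1$: then $d(m,m)=0\leq r+1$ but no such $z$ exists, and indeed $||\theta_{r+1}(m,m)||^M\not\geq 1$. This is an artifact of $\theta_1$ as written lacking the disjunct $x=y$ that the classical distance formulas carry, so the corollary as literally stated has the same blind spot and your proof is exactly as correct as the statement; still, it is worth flagging that either $\theta_1$ should be amended to include $x=y$ (after which your induction goes through verbatim) or the degenerate isolated-point case should be set aside.
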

\begin{rem}
    Note that we only ask for our distance formulas to encode distance at the level of the modelling relation.\footnote{This contrasts with the approach in \cite{BiziereGradelNaaf23:Localitysemiring} which requires an encoding of distance that allows for quantification within distance to take the actual value. This difference is driven by the needs of the respective investigations which we will return to in Section~\ref{sec:semirings}.}
\end{rem}

This corollary relies on our limited form of witnessing applying to both the quantifiers $\exists/\forall$ and the lattice connectives $\lor/\land$. Given any $t\in A$, if this limited form of witnessing holds at $t$ then we can encode the threshold distance metric $d_t(x,y)$ for the $t$-Gaifman graph using the standard expansion of a model. In particular, when $M$ is finite and $A$ is a chain this applies to all $t\in A$. 

\begin{cor}
    Let $A$ be a residuated chain, $t\in A$, $M$ a finite $\CP$-model over $A$, $m,n\in M$ and $r\in\omega$. \[M^\ast\models t\und \theta_r(m,n)\text{ iff }||\theta_r(m,n)||^M\geq t\text{ iff }d_{\geq t}(m,n)\leq r\text{ iff }n\in B_t(r,m).\]
\end{cor}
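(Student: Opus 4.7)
The plan is to chain the four asserted equivalences, leveraging the path-weight description of $\theta_r$ sketched just before the corollary, the residuation property, and the witnessing afforded by a finite model over a chain.

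For the first equivalence I would simply unfold the semantics in the standard expansion. Since the nullary symbol $a$ is interpreted as $a$ in $M^\ast$, we have $||a\und\theta_r(m,n)||^{M^\ast}=a\und ||\theta_r(m,n)||^M$, and by the residuation property $a\und b\geq 1$ iff $a\leq b$. So $M^\ast\models a\und\theta_r(m,n)$ iff $||\theta_r(m,n)||^M\geq a$. The last equivalence $d_a(m,n)\leq r$ iff $n\in B_a(r,m)$ is definitional: $B_a(r,m)$ is by construction the set of elements within distance $r$ in the $a$-Gaifman graph.

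The substantive step is the middle equivalence $||\theta_r(m,n)||^M\geq a$ iff $d_a(m,n)\leq r$. My plan is to reuse (and strengthen) the path-weight description: $||\theta_r(m,n)||^M$ is the supremum, over paths $m=m_0,m_1,\dots,m_s=n$ of length $s\leq r$ in $M$, of their weight $\inf_i v_{m_i,m_{i+1}}$. Because $M$ is finite and $A$ is a chain, every supremum and infimum appearing in this unfolding---those over paths, those over atomic witnesses inside the edge labels $v_{m_i,m_{i+1}}$, and those interpreting the $\exists$ and $\lor$ inside $\theta_r$ itself---is attained. Hence $||\theta_r(m,n)||^M\geq a$ iff there is an actual path $m=m_0,\dots,m_s=n$ of length $s\leq r$ with each $v_{m_i,m_{i+1}}\geq a$, iff each consecutive pair is adjacent in the $a$-Gaifman graph, iff $d_a(m,n)\leq r$. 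An induction on $r$ following the recursive shape $\theta_{r+1}(x,y)\equiv\exists z(\theta_r(x,z)\land\theta_1(z,y))$ would formalise this, with the base $r=0$ being handled by the crispness of equality and $r=1$ by directly reading off the definition of $\theta_1$ together with that of $v_{m,n}$.

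The main obstacle, and where the chain hypothesis really earns its keep, is the translation between an edgewise threshold $v_{m_i,m_{i+1}}\geq a$ and the existence of a \emph{concrete} atomic witness $R(\bar{c})$ with $R^M(\bar{c})\geq a$: one needs $\sup\{R^M(\bar{c}):R\in\CP,\,m_i,m_{i+1}\in\bar{c}\}\geq a$ to imply that some such value already lies above $a$, so that the supremum really certifies an edge in the $a$-Gaifman graph. In a residuated chain this passage is immediate from witnessing, but it is precisely this step that can fail without well-connectedness/linearity, which is exactly the remark made in the paragraph preceding the corollary and the reason the result is stated for chains.
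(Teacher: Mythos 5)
Your proposal is correct and matches the paper's (implicit) argument: the paper leaves this corollary unproved, deriving it exactly as you do from the residuation property ($1\leq a\und b$ iff $a\leq b$) in the standard expansion, the path-weight description of $||\theta_r(m,n)||^M$, and the attainment of the relevant suprema/infima in a finite model over a chain, with the last equivalence being definitional. Your closing observation about why linearity (rather than mere well-connectedness) is needed for arbitrary thresholds $a$ is precisely the point of the remark preceding the corollary.
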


The classical distance formulas naturally come paired with their negations, with the property that $M\models\neg \theta_r(m,n)$ iff $d(m,n)>M$, and these are essential to the definition of local formulas/basic local sentences. In general residuated lattices do not have such a classical-like negation term, but we can consider a still interesting smaller class which do and consequently for which `far' distance formulas can be defined, namely those $A$ with a \textit{co-atom}. Recall an element $c\in A$ is a co-atom iff $\forall a\in A\ a\leq c$ iff $1\not\leq a$. The co-atom lets us define a classical like negation term $\tau(x):= x\und c$ in the language $P^A$ where $M^\ast\models\phi$ iff $M^\ast\not\models t(\phi)$. This lets us define the distance formulas as we did classically namely: \[\tau(t\und\theta_r(x,y)).\]
In the case of the strict $t$-threshold distance metrics the situation is reversed, the standard expansion of a model can directly encode elements being far from each other and we use the negation like term to encode elements being close. 
\begin{align*}
    M^\ast\models\theta_r(m,n)\und t\text{ iff }||\theta_r(m,n)||^M\leq t&\text{ iff }d_{>t}(m,n)> r\text{ iff }n\not\in B_{t'}(r,m).\\
    M^\ast\models (\theta_r(m,n)\und t)\und c\text{ iff }||\theta_r(m,n)||^M>t&\text{ iff }d_{>t}(m,n)\leq r\text{ iff }n\in B_{t'}(r,m).
\end{align*}
It is worth noting that for a given (strict) $t$-threshold distance we could equally use an element $c$ that acts like a (atom) co-atom for $t$ itself, i.e. $\forall a\in A\ t\not\leq a$ iff $a\leq c$ or $\forall a\in A\ t\not< a$ iff $c\leq a$ respectively. We could then directly encode being far by $\theta_r(m,n)\und c$ or being near by $c\und \theta_r(m,n)$ respectively. What follows could be equally completed using any of the encodings for distance described, should that be appropriate for a particular algebra and distance metric combination. For ease of presentation we will work with the assumption that our algebra has a $1$ co-atom. We proceed working with an arbitrary (strict) threshold distance metric and use $\delta_r(x,y)$ and $\rho_r(x,y)$ as shorthand for any appropriate encoding of distance, i.e. formulas such that\[M^\ast\models\delta_r(m,n)\text{ iff }d(m,n)\leq r,\, M^\ast\models\rho_r(m,n)\text{ iff }d(m,n)>r.\]

Classically, the encoding of distance allows one to associate to any formula $\phi(\bar{x},\bar{y})$ a formula $\phi^{r,\bar{x}}(\bar{x},\bar{y})$ such that for any classical model $M$, $\bar{m}\in M$ and $\bar{n}\in B(r,\bar{m})$: \[M\models\phi^{r,\bar{x}}(\bar{m},\bar{n}) \text{ iff } B(r,\bar{m})\models\phi(\bar{m},\bar{n}).\] 
This association is the basis of the definition of local formula. In the many-valued setting, because our distance formulas take some actual value in the algebra we have to be more careful when trying to construct a similar association. This prompts our second restriction to witnessed $\CP$-models. In witnessed models we can recover half the proof for all the quantifier shifts for $\exists$ and $\forall$. The other half of the proof follows when we note that for residuated lattices every connective $\land,\lor,\cdot,\und,/$ is either monotone (order-preserving) or antitone (order inverting) in each position. 
\begin{prop}  
    Let $\CL$ be an algebraic signature and $\CP$ a predicate language. Let $A$ be a $\CL$-algebra and $M$ a $\CP$-model over $A$. Let $\phi,\lambda_1,...,\lambda_n$ be $\CP$-formulas where $x$ is not free in any $\lambda_i$ and $\circ\in\CL$ be any $n+1$-ary connective.
    
    If $M$ is a witnessed $\CP$-model then:\begin{enumerate}[label=\roman*.]
    \item $||\forall x(\circ(\phi,\lambda_1,...,\lambda_n))||^M\leq ||\circ(\exists x\phi, \lambda_1,...,\lambda_n)||^M\leq||\exists x(\circ(\phi,\lambda_1,...,\lambda_n))||^M$,
    \item $||\forall x(\circ(\phi,\lambda_1,...,\lambda_n))||^M\leq||\circ(\forall x\phi,\lambda_1,...,\lambda_n)||^M\leq ||\exists x(\circ(\phi,\lambda_1,...,\lambda_n))||^M$,
    \end{enumerate}
    If $\circ$ is monotone for the position of $\phi$ we have: \begin{enumerate}
        \item [iii.] $||\exists x(\circ(\phi,\lambda_1,...,\lambda_n))||^M\leq||\circ(\exists x\phi,\lambda_1,...,\lambda_n)||^M$,
        \item [iv.] $||\circ(\forall x\phi, \lambda_1,...,\lambda_n)||^M\leq||\forall x(\circ(\phi,\lambda_1,...,\lambda_n))||^M$,
    \end{enumerate}

    When $\circ$ is antitone for the position of $\phi$ we have:\begin{enumerate}
        \item [v.] $||\exists x(\circ(\phi,\lambda_1,...,\lambda_n))||^M \leq ||\circ(\forall x\phi,\lambda_1,...,\lambda_n)||^M$,
        \item [vi.] $||\circ(\exists x\phi,\lambda_1,...,\lambda_n)||^M\leq||\forall x(\circ(\phi,\lambda_1,...,\lambda_n))||^M$. 
    \end{enumerate}

    Consequently, if every position of each connective $\circ\in\CL$ is monotone or antitone then for every formula $\phi$ there is a formula $\psi$ in \textit{prenex normal form}, i.e. a string of quantifiers applied to a quantifier-free formula that is strongly equivalent to $\phi$, i.e. such that for any finite $\CP$ model $M$ and $\bar{m}\in\CP$ \[||\phi(\bar{m})||^M=||\psi(\bar{m})||^M\]
\end{prop}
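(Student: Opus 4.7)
The plan is to observe that in a witnessed model, the sup and inf defining the quantifiers are actually attained, and so we can reduce quantifier-connective comparisons to inequalities between the connective applied at specific witness points. I will handle the six inequalities in three natural pairs, and then build prenex normal forms by induction.

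For (a) and (b), no monotonicity assumption is needed. Fixing a witness $m^{\exists}$ with $\|\exists x\phi\|^M = \|\phi(m^{\exists})\|^M$ and a witness $m^{\forall}$ with $\|\forall x\phi\|^M = \|\phi(m^{\forall})\|^M$, I would use that the infimum is a lower bound and the supremum an upper bound, so
\[
\|\forall x(\circ(\phi,\bar\lambda))\|^M = \inf_{m}\|\circ(\phi(m),\bar\lambda)\|^M \leq \|\circ(\phi(m^{\exists}),\bar\lambda)\|^M = \|\circ(\exists x\phi,\bar\lambda)\|^M,
\]
and the same element $m^{\exists}$ also gives $\|\circ(\exists x\phi,\bar\lambda)\|^M \leq \sup_{m}\|\circ(\phi(m),\bar\lambda)\|^M = \|\exists x(\circ(\phi,\bar\lambda))\|^M$. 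Part (b) is the same argument using $m^{\forall}$ in place of $m^{\exists}$.

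For (c)--(f), I combine witnessing with the order behaviour of $\circ$. For (c), pick $m^\ast$ witnessing $\|\exists x(\circ(\phi,\bar\lambda))\|^M$; since $\|\phi(m^\ast)\|^M \leq \|\exists x\phi\|^M$ and $\circ$ is monotone in the position of $\phi$, we obtain $\|\circ(\phi(m^\ast),\bar\lambda)\|^M \leq \|\circ(\exists x\phi,\bar\lambda)\|^M$. Part (d) is dual using the $\forall$-witness and the fact that $\|\phi(m^{\forall})\|^M \leq \|\phi(m)\|^M$ for every $m$, together with monotonicity. Parts (e) and (f) are the same two arguments with antitonicity reversing the induced inequalities, so that $\exists$-quantifiers end up majorised by $\forall$-ones and conversely.

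Finally, for prenex normal form I would combine these inequalities into equalities. For a monotone position, (a) and (c) together give $\|\circ(\exists x\phi,\bar\lambda)\|^M = \|\exists x(\circ(\phi,\bar\lambda))\|^M$, and (b), (d) together give the analogous identity for $\forall$. For an antitone position, (a) and (f) collapse to $\|\circ(\exists x\phi,\bar\lambda)\|^M = \|\forall x(\circ(\phi,\bar\lambda))\|^M$, and (b), (e) give $\|\circ(\forall x\phi,\bar\lambda)\|^M = \|\exists x(\circ(\phi,\bar\lambda))\|^M$, so the quantifier flips as it shifts out through an antitone slot. Renaming bound variables in the usual way to avoid capture with the free variables of the $\lambda_i$, a straightforward induction on formula complexity then repeatedly applies these identities to pull every quantifier past every connective, yielding a strongly equivalent formula in prenex form. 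The only real point of care is bookkeeping during this induction: each application of a quantifier shift through an antitone position swaps $\exists$ and $\forall$, so the resulting prefix is not generally the one obtained by naive left-to-right extraction, but tracking the monotone/antitone status of each position handles this cleanly.
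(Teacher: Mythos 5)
Your proposal is correct and follows essentially the same route as the paper: use witnessing to replace the quantified subformula by its value at a witness for (a)--(b), use monotonicity/antitonicity pointwise for (c)--(f), combine the inequalities into the four quantifier-shift equalities (with the $\exists$/$\forall$ flip at antitone positions), and finish by induction with renaming of bound variables. The only cosmetic difference is that for (c)--(f) you invoke a witness of the compound formula $\exists x(\circ(\phi,\bar\lambda))$ where the paper argues for each $m$ and then passes to the supremum/infimum, which is an equivalent argument under the same hypotheses.
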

\begin{proof}
    i. By $\exists$-witnessing \[||\circ(\exists x\phi,\lambda_1,...,\lambda_n)||=\circ^A(||\exists x\phi||,||\lambda_1||,...,||\lambda_n||)=\circ^A(||\phi(m)||,||\lambda_1||,...,||\lambda_n||)\] 
    for some $m\in M$. Therefore:\begin{multline*}
    ||\forall x(\circ(\phi,\lambda_1,...,\lambda_n)||=\inf_{m}\{\circ^A(||\phi(m)||,||\lambda_1||,...,||\lambda_n||)\}\\
    \leq ||\circ(\exists x\phi,\lambda_1,...,\lambda_n)||\leq \sup_{m}\{\circ(||\phi(m),\lambda_1,...,\lambda_n)||)\}=||\exists x(\circ(\phi,\lambda_1,...,\lambda_n))||
    \end{multline*} 
    The proof for ii. is symmetric. 

    iii. Letting $m\in M$, $\circ$ monotone for $\phi$'s position implies:\begin{multline*}
    ||\circ(\phi(m),\lambda_1,...,\lambda_n)||=\circ^A(||\phi(m)||,||\lambda_1||,...,||\lambda_n||)\\\leq \circ^A(||\exists x\phi||,||\lambda_1||,...,||\lambda_n||)=||\circ(\exists x\phi,\lambda_1,..,\lambda_n)||.
    \end{multline*}
    Again iv. is symmetric. 

    v. Let $m\in M$, as $\circ$ is antitone for $\phi$'s position, \[\circ^A(||\phi(m)||,||\lambda_1||,...,||\lambda_n||)\leq \circ^A(||\forall x\phi||,||\lambda_1||,...,||\lambda_n||)=||\circ(\forall x\phi,\lambda_1,...,\lambda_n)||.\]
    As $m\in M$ was arbitrary \[||\exists x(\circ(\phi,\lambda_1,...,\lambda_n))||\leq ||\circ(\forall x\phi,\lambda_1,...,\lambda_n)||.\] 
    Again vi. is symmetric. 

    The conclusion for prenex normal form is then a straightforward induction utilising each of the quantifier shifts and renaming of variables as appropriate.
\end{proof}

We define our relative quantified formulas for formulas in prenex normal form, and factor through these for formulas not in prenex normal form.
\begin{defi}
    Let $\phi(\bar{x},\bar{y})$ be a formula in prenex normal form. We define its relativisation $\phi^{r,\bar{x}}(\bar{x},\bar{y})$ to $B(r,\bar{m})$ using the distance formulas by elimination of the outermost quantifiers.  
    \begin{align*}
        [\exists z\psi(\bar{x},\bar{y},z)]^{r,\bar{x}}&:= \exists z(\delta_r(\bar{x},z)\land \psi^{r,\bar{x}}(\bar{x},\bar{y},z)). \\
        [\forall z\psi(\bar{x},\bar{y},z)]^{r,\bar{x}}&:= \forall z(\rho_r(\bar{x},z)\lor \psi^{r,\bar{x}}(\bar{x},\bar{y},z)).
    \end{align*}
    Letting $\phi(\bar{x},\bar{y})$ be any $\CP$-formula, and $Q_1z_1...Q_nz_n\psi(\bar{x},\bar{y},\bar{z})$ be a strongly equivalent formula in prenex normal form (with $Q_i\in\{\exists,\forall\}$). We define the relativisation of $\phi$ via $\psi$.
    \[\phi^{r,\bar{x}}(\bar{x},\bar{y})=[Q_1z_1...Q_nz_n\psi(\bar{x},\bar{y},\bar{z})]^{r,\bar{x}}\] 
\end{defi}

\begin{lem}\label{lem:localform}
    Let $A$ be a residuated well-connected lattice with co-atom and consider an encodable distance metric for finite $\CP$-models over $A$. Letting $M$ be a finite witnessed $\CP$-model over $A$, for any $\CP^\ast$-formula $\phi(\bar{x},\bar{y})$ any $\bar{m}\in M$, $r\in\omega$, $\bar{n}\in B(r,\bar{m})$ we have:
    \[M^\ast\models\phi^{r,\bar{m}}(\bar{m},\bar{n})\text{ iff }B(r,\bar{m})^\ast\models\phi(\bar{m},\bar{n}).\]
\end{lem}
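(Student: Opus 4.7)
The plan is to follow the classical template, with the stability of $\land,\lor$ around $\geq 1$ and the limited form of witnessing stepping in wherever bivalence was used classically. The first step is to reduce to the case where $\phi$ is in prenex normal form. By the previous proposition every connective in the residuated lattice signature is monotone or antitone in each argument, so $\phi$ is strongly equivalent to some $\psi = Q_1 z_1 \dots Q_n z_n \chi(\bar{x},\bar{y},\bar{z})$, and by definition the relativization of $\phi$ is built through $\psi$. Since the lemma concerns only $\models$, strong equivalence transfers the claim between $\phi$ and $\psi$ in both $M^\ast$ and $B(r,\bar{m})^\ast$.

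I would then induct on the number of quantifiers. For the base case $\phi$ is quantifier-free and $\phi^{r,\bar{x}}=\phi$; a straightforward induction on the formula structure gives $||\phi(\bar{m},\bar{n})||^{M^\ast}=||\phi(\bar{m},\bar{n})||^{B(r,\bar{m})^\ast}$, since atomic relation symbols applied to tuples from $B(r,\bar{m})$ agree (because $B(r,\bar{m})$ is a strong substructure of $M$), crisp equality agrees trivially, and the truth constants of $\CP^\ast$ take the same fixed algebra values in both expansions. For the inductive step with $\phi = \exists z\,\psi(\bar{x},\bar{y},z)$, the relativization is $\exists z(\delta_r(\bar{x},z)\land \psi^{r,\bar{x}}(\bar{x},\bar{y},z))$. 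If $M^\ast\models \phi^{r,\bar{m}}(\bar{m},\bar{n})$, then because $M$ is finite over a chain and hence witnessed there is a $z_0\in M$ at which the conjunction takes value $\geq 1$; stability of $\land$ around $\geq 1$ forces both conjuncts to be $\geq 1$, the encoding of $\delta_r$ places $z_0\in B(r,\bar{m})$, and the induction hypothesis then yields $B(r,\bar{m})^\ast\models\psi(\bar{m},\bar{n},z_0)$. The converse direction picks a witness $z_0\in B(r,\bar{m})$, applies the induction hypothesis, combines it with $||\delta_r(\bar{m},z_0)||^{M^\ast}\geq 1$ via stability of $\land$, and then uses witnessing in $M$. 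The $\forall z$ case is dual, using $\rho_r$ and stability of $\lor$: for $z\in B(r,\bar{m})$ the first disjunct has value $<1$ so the second must carry the value, whereas for $z\notin B(r,\bar{m})$ the first disjunct is already $\geq 1$ and the disjunction is trivially satisfied.

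The main obstacle is not really computational but conceptual: the stated equivalence lives at the level of $\models$ rather than as algebraic equality of truth values, so the whole argument rests on the stability results for $\land,\lor$ and their quantifier analogues in finite models, together with the encoding of the distance metric by $\delta_r,\rho_r$ at the $1$-threshold. The residuated-chain hypothesis delivers witnessing for finite models (so existential and universal chases work in a single step), and the co-atom hypothesis is precisely what permits $\rho_r$ to be defined; without either, the corresponding half of the quantifier step would break.
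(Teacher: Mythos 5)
Your proposal is correct and follows essentially the same route as the paper's proof: reduce to prenex normal form via the quantifier-shift proposition, then induct on the number of quantifiers, using witnessing (from finiteness over a chain), stability of $\land$ and $\lor$ around $\geq 1$, and the $\delta_r/\rho_r$ encoding of distance at the $\models$ level. The only differences are cosmetic — you spell out the quantifier-free base case and the $\forall$ case, which the paper dispatches by assertion and symmetry respectively.
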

\begin{proof}
    It is enough to prove the lemma for $\phi$ in prenex normal form; we proceed by induction on the number of quantifiers in $\phi$. When $\phi$ is quantifier free $\phi^{r,\bar{x}}(\bar{x},\bar{y})=\phi(\bar{x},\bar{y})$ and so for $\bar{n}\in B(k,\bar{m})$ \[||\phi^{r,\bar{x}}(\bar{m},\bar{n})||^M=||\phi(\bar{m},\bar{n})||^M=||\phi(\bar{m},\bar{n})||^{B(k,\bar{m})}.\]
    So suppose for $l\in\omega$ that for all formulas $\psi$ in prenex normal form with at most $l$ quantifiers that\[M^\ast\models[\phi(\bar{m},\bar{n})]^{r,\bar{x}}\text{ iff }B(r,\bar{m})^\ast\models \phi(\bar{m},\bar{n}).\]
    We consider $QzQ_1z_1...Q_lz_l(\phi(\bar{x},\bar{z},\bar{y}))$. Either $Q=\exists$ or $\forall$. The two cases are symmetric, we detail the $Q=\exists$ case.
    \[[\exists z Q_1z_1...Q_lz_l(\phi(\bar{x},\bar{z},\bar{y},z))]^{k,\bar{x}}=\exists z([Q_1z_1...Q_lz_l\phi(\bar{x},\bar{z},\bar{y},z)]^{r,\bar{x}}
    \land \delta_r(\bar{x},z)).\]
    Suppose $M^\ast$ models this formula. Then,
    \[1\leq ||\exists z([Q_1z_1...Q_lz_l\phi(\bar{m},\bar{z},\bar{n},z)]^{r,\bar{x}}\land \delta_k(\bar{x},z))||^M=||[Q_1z_1...Q_lz_l\phi(\bar{m},\bar{z},\bar{n},n)]^{r,\bar{x}}\land \delta_r(\bar{m},n)||^M\] for some $n\in M$ by witnessing. This implies $||[Q_1z_1...Q_lz_l\phi(\bar{m},\bar{z},\bar{n},n)]^{r,\bar{x}}||^M$ and $||\delta_r(\bar{m},n)||^M$ are both $\geq 1$, in particular $n\in B(r,\bar{m})$. By the inductive hypothesis:\begin{align*} &||Q_1z_1...Q_lz_l\phi(\bar{m},\bar{z},\bar{n},n))||^{B(r,\bar{m})}\geq 1 \text{ and } n\in B(r,\bar{m})\\
    \Rightarrow\ &||\exists z(Q_1z_1...Q_lz_l\phi(\bar{m},\bar{z},\bar{n},z))||^{B(r,\bar{m})}\geq 1.
    \end{align*}
    
    Conversely, if $||\exists z(Q_1z_1...Q_lz_l\phi(\bar{m},\bar{z},\bar{n},z))||^{B(r,\bar{m})}\geq 1$, then by witnessing we have $n\in B(r,\bar{m})$, i.e. $n\in M:||\delta_r(\bar{m},n)||\geq 1$ where \[||Q_1z_1...Q_lz_l\phi(\bar{z},\bar{n},n))||^{B(r,\bar{m})}\geq 1.\]
    By induction hypothesis $||[Q_1z_1...Q_lz_l\phi(\bar{m},\bar{z},\bar{n},z)]^{r,\bar{x}}||^M\geq 1$ which implies, \[||\exists z([Q_1z_1...Q_lz_l\phi(\bar{m},\bar{z},\bar{n},z)]^{r,\bar{x}}\land \delta_r(\bar{m},n))||^M=||[\exists z(Q_1z_1...Q_lz_l(\bar{m},\bar{z},\bar{n},z))]^{r,\bar{x}}||^M\geq 1.\qedhere\]
\end{proof}

Recall from remark \ref{rem:witnessedifflinear} that every finite model over an algebra $A$ is witnessed iff $A$ is linear. For this reason we restrict to the setting where $A$ is a residuated chain with co-atom and can give our many-valued understanding of local formula. 
\begin{defi}
    Let $A$ be a residuated chain and consider the distance formulas $\delta_r(x,y)$, $\rho_r(x,y)$ for any of the considered distance metrics. We define two special classes of formulas. 

    The local formulas are exactly those of the form $\phi^{r,\bar{x}}(\bar{x})$ with no free variables beyond those used to define the neighbourhood quantifiers. We call a sentence basic local iff it has the form \[\exists x_1...\exists x_n\bigland\limits_{1\leq i\not =j\leq n}\rho_{2r}(x_i,x_j)\land\bigland\limits_{1\leq i\leq n}\psi^{r,x_i}(x_i).\]
\end{defi}

We are now able to state and prove our version of Gaifman's Lemma. As a final useful shorthand, we say that a subset $X$ of a $\CP$-model is \emph{l-scattered} iff $\forall m,m'\in X\ d(m,m')>l$. 
\begin{thm}[Gaifman Lemma for Residuated Chains]\label{thm:GaifLem}
    Let $\CP$ be a finite relational language with object constants and let $M,N$ be finite $\CP$-models over a residuated chain $A$ with co-atom. Consider any encodable distance metric for $\CP$-models with distance formulas $\delta_r(x,y)$ and $\rho_r(x,y)$. 
    
    Let $k\in\omega$, $\bar{m}\in M$ and $\bar{n}\in N$ and suppose that:\begin{enumerate}[label=\roman*.]
        \item For all local $\CP$-formulas $\phi^{7^k,\bar{x}}(\bar{x})\ M^\ast\models\phi^{7^k,\bar{x}}(\bar{m})$ iff $N^\ast\models\phi^{7^k,\bar{x}}(\bar{n})$.
        \item $M^\ast$ and $N^\ast$ satisfy the same basic local sentences.
    \end{enumerate}
    Then $M,\bar{m}\equiv^s_k N,\bar{n}$.
\end{thm}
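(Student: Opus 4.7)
The plan is to construct a back-and-forth system $\langle I_j\rangle_{j\leq k}$ witnessing $M,\bar{m}\isom_k N,\bar{n}$; by Corollary~\ref{cor:standardEF} this yields the desired $M,\bar{m}\equiv^s_k N,\bar{n}$. Set $r_j := 7^j$, so that $r_k$ matches the radius in hypothesis $i.$, and define
\[I_j := \{\bar{m}' \mapsto \bar{n}' \text{ p.iso} : M^\ast \models \phi^{r_j,\bar{x}}(\bar{m}') \iff N^\ast \models \phi^{r_j,\bar{x}}(\bar{n}') \text{ for every local $\CP^\ast$-formula } \phi^{r_j,\bar{x}}\}.\]
The map $\bar{m}\mapsto\bar{n}$ lies in $I_k$ by hypothesis $i.$; the partial isomorphism condition is immediate because local $0$-formulas built from the $\CP^\ast$ truth constants and the residuum pin down atomic valuations, forcing a p.iso by Lemma~\ref{lem:isotype}. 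The nesting $I_{j+1}\subseteq I_j$ holds because any local $r_j$-formula becomes a local $r_{j+1}$-formula after further relativization, with the same truth value by Lemma~\ref{lem:localform}.

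The heart of the proof is the forth condition (back being symmetric). Given $\bar{m}'\mapsto\bar{n}'\in I_{j+1}$ and $m\in M$, we want $n\in N$ with $\bar{m}'m\mapsto\bar{n}'n\in I_j$, and split on the position of $m$. In the \emph{close} case $m\in B(2r_j+1,\bar{m}')$, since $r_{j+1}=7r_j\geq 3r_j+1$ the $r_j$-neighbourhood of $\bar{m}'m$ sits inside the $r_{j+1}$-neighbourhood of $\bar{m}'$; hence for any local $r_j$-formula $\psi^{r_j,\bar{x}y}(\bar{x},y)$, the existential shift
\[\exists y\bigl(\delta_{2r_j+1}(\bar{x},y) \land \psi^{r_j,\bar{x}y}(\bar{x},y)\bigr)\]
is equivalent to a local $r_{j+1}$-formula of $\bar{x}$. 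Using finiteness of $M$ and the co-atom to provide a negation-like term, the full local $r_j$-type of $\bar{m}'m$ can be packaged as a finite Boolean combination of such formulas, each of which transfers from $M^\ast$ to $N^\ast$ via membership in $I_{j+1}$, yielding a single close witness $n$ near $\bar{n}'$.

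In the \emph{far} case $m\notin B(2r_j+1,\bar{m}')$, let $\iota$ be the local $r_j$-isomorphism type of $m$. The existence of any fixed number $t$ of elements pairwise at distance $>2r_j$ all realizing $\iota$ is a basic local sentence, so by hypothesis $ii.$ these counts agree between $M^\ast$ and $N^\ast$. Because $|\bar{n}'|\leq k$, the ball $B(2r_j+1,\bar{n}')$ can absorb at most $k$ such pairwise far-apart witnesses, and a short pigeonholing argument supplies $n\in N\setminus B(2r_j+1,\bar{n}')$ of type $\iota$. The resulting map $\bar{m}'m\mapsto\bar{n}'n$ is then a partial isomorphism with matching local $r_j$-type via an adaptation of the gluing argument of the lemma preceding Theorem~\ref{thm:Hanf} to the (strict) threshold distance of this section.

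The principal obstacle I anticipate is making rigorous ``agreement on the full local $r_j$-type'' in the many-valued setting: classical types are Boolean, but here they carry truth-value data. Exploiting the standard expansion $\CP^\ast$ to convert valuation equalities into $\models$ statements, together with finiteness of the atomic structure in any $r_j$-ball, keeps the type expressible as a finite conjunction of local formulas, which is what lets the $I_{j+1}$ transfer in the close case and the basic local sentence transfer in the far case actually produce a single witness $n$ matching all the atomic constraints simultaneously.
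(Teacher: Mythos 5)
Your overall architecture matches the paper's (a back-and-forth system indexed by radii $7^j$, a close/far case split on the new element, and basic local sentences to transfer counts of scattered realisers), but two steps as written do not go through. The first is the invariant itself and the claim that ``the full local $r_j$-type of $\bar{m}'m$ can be packaged as a finite Boolean combination'' of local formulas. Your $I_j$ demands agreement on \emph{all} local $r_j$-formulas, with no quantifier-rank bound, so the forth step needs to transfer a complete type in one shot. Classically this is fine because there are only finitely many formulas of bounded rank up to equivalence; in the many-valued setting no such finiteness is available (the paper explicitly flags this as open in its conclusion), and finiteness of $M$ does not help, since over an infinite algebra the truth constants of $\CP^\ast$ already generate infinitely many inequivalent local formulas. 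The paper's proof avoids exactly this by grading the invariant: $I_j$ asks only that $(B(7^j,\bar{m}\bar{u}),\bar{m}\bar{u})^\ast\equiv^s_{g(j)}(B(7^j,\bar{n}\bar{v}),\bar{n}\bar{v})^\ast$ for a sufficiently fast-growing $g$, and the transferable surrogate for ``the type'' is the single formula $\psi^j_{\bar d}:=[\phi^{g(j)}_{B(7^j,\bar d),\bar d}]^{7^j,\bar x}$, whose behaviour is controlled by Lemma~\ref{lem:localform} together with Corollary~\ref{cor:standardEF}; the constraint that these formulas have rank below $g(j{+}1)$ is what makes the close case a one-formula transfer. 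Without either the grading or the isomorphism-type formulas, your close case has no finite object to move across.

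The second gap is the far case. Matching \emph{global} maximal scattered counts (hypothesis ii.) plus the observation that $B(2r_j{+}1,\bar n')$ absorbs at most $k$ pairwise far witnesses does not yield an element of type $\iota$ outside $B(2r_j{+}1,\bar n')$: for instance both models may have global scattered count $1$ while all $\iota$-realisers of $N$ sit close to $\bar n'$, and nothing in your pigeonhole rules this out. The paper's argument needs two counts — the maximal scattered number $e$ of $\iota$-realisers computed inside $B(7^{j+1},\bar u)$ (transferred to $B(7^{j+1},\bar v)$ via the inductive $\equiv^s_{g(j+1)}$ hypothesis) and the global count $i$ (transferred via basic local sentences) — and then splits: if $e<i$ a genuinely far witness exists by counting, but if $e=i$ every realiser, including $u$ itself, lies within distance $6\cdot 7^j$ of $\bar u$, and the witness is found \emph{inside} $B(7^{j+1},\bar v)$ but outside $B(2\cdot 7^j,\bar v)$ by transferring a single local formula of radius $7^{j+1}$. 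Your sketch collapses these into one pigeonhole and misses the $e=i$ sub-case entirely; relatedly, the final appeal to the gluing lemma before Theorem~\ref{thm:Hanf} is unnecessary here — once the neighbourhoods of $\bar u$ and $u$ (resp.\ $\bar v$ and $v$) are disjoint, the graded equivalences combine directly, which is how the paper concludes $\bar u u\mapsto\bar v v\in I_j$.
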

\begin{proof}
    Once again we are able to mirror the textbook classical proof \cite[Theorem 1.5.1]{EbbinghausFlum95:Finmodeltheory}. The addition of a leading sequence requires some minor initial adjustments. We show that $\langle I_j\rangle_{j\leq k}$ is a b\&f-system for $M^\ast$ and $N^\ast$ where we define:\[I_j:=\{\bar{m}\bar{u}\mapsto\bar{n}\bar{v}\text{ p.iso}:|\bar{u}|\leq k-j\text{ and }(B(7^j,\bar{m}\bar{u}),\bar{m}\bar{u})^\ast\equiv^s_{g(j)}(B(7^j,\bar{n}\bar{v}),\bar{n}\bar{v})^\ast\}.\]
    where $g(j)$ is a sufficiently increasing function defined by constraints established in the course of the proof. The conclusion then follows by corollary \ref{cor:standardEF}. To ease notation we will simply write $\bar{u}\mapsto\bar{v}$ for a p.iso and assume the leading $\bar{m}\mapsto\bar{n}$ prefix.  

    We first claim that $\bar{m}\mapsto\bar{n}\in I_k$. Letting $\phi(\bar{x})$ be any $\CP^\ast\cup\{\bar{x}\}$-sentence, we observe:\[B(7^k,\bar{m})^\ast,\bar{m}\models\phi(\bar{m})\text{ iff }M^\ast\models\phi^{7^k,\bar{x}}(\bar{m})\text{ iff }N^\ast\models\phi^{7^k,\bar{x}}(\bar{n})\text{ iff }B(7^k,\bar{n})^\ast,\bar{n}\models\phi(\bar{n}),\]
    where the first and third bi-conditional holds by Lemma \ref{lem:localform} and the middle bi-conditional from assumption $i.$.

    By symmetry we just need to show the forth property holds for the system, that is for $0\leq j\leq k$, $u\in M$ and $\bar{u}\mapsto\bar{v}\in I_{j+1}$ we want to find $v\in N$ such that $\bar{u}u\mapsto\bar{v}v$ is a p.iso and $(B(7^j,\bar{u}u),\bar{u}u)^\ast\equiv^s_{g(j)} (B(7^j,\bar{v}v),\bar{v}v)^\ast$. We introduce a useful abbreviation for any tuple $\bar{d}$ belonging to any model $D$:\[\psi^j_{\bar{d}}(\bar{x}):=[\phi^{g(j)}_{B(7^j,\bar{d}),\bar{d}}(\bar{x})]^{7^j,\bar{x}}.\]
    
    Combining lemmas \ref{lem:localform} and corollary \ref{cor:standardEF} yields the important behaviour of the formula $\psi^j_{\bar{d}}(\bar{x})$ for any model $S$ and $\bar{s}\in S$: \[S^\ast\models\psi^j_{\bar{d}}(\bar{s})\text{ iff }B(7^j,\bar{s})^\ast\models\phi^{g(j)}_{B(7^j,\bar{d}),\bar{d})}(\bar{s})\text{ iff }B(7^j,\bar{s})^\ast,\bar{s}\equiv^s_{g(j)} B(7^j,\bar{d})^\ast,\bar{d}.\]

    Case $1$: $u\in B(2\cdot 7^j,\bar{u})$. Then $u\in B(7^{j+1},\bar{u})$ and $B(7^{j+1},\bar{u})^\ast\models \exists z(\delta_{2\cdot7^j}(\bar{u},z)\land\psi^j_{\bar{u}u}(\bar{u}z))$. Provided $g(j+1)$ is large enough such that the quantifier rank of this formula is less than or equal to $g(j+1)$ we get by the inductive assumption:\[B(7^{j+1},\bar{v})^\ast,\bar{v}\models\exists z(\delta_{2\cdot 7^j}(\bar{v},z)\land \psi^j_{\bar{u}u}(\bar{v}z)),\] and by witnessing we get $v\in B(2\cdot 7^j,\bar{v})$ such that \[B(7^{j+1},\bar{v})^\ast,\bar{v}\models\psi^j_{\bar{u}u}(\bar{v}v),\text{ i.e. }B(7^j,\bar{u}u)^\ast,\bar{u}u\equiv^s_{g(j)} B(7^j,\bar{v}v)^\ast,\bar{v}v,\] therefore $\bar{u}u\mapsto\bar{v}v\in I_j$. 

    Case $2$: $u\not\in B(2\cdot 7^j,\bar{u})$, that is $B(7^j,\bar{u})\cap B(7^j,u)=\varnothing$. For $s\geq 1$ we define a formula $\gamma_s(\bar{x})$ which expresses that $\{\bar{x}\}$ is a $4\cdot 7^j$-scattered subset where every members' $7^j$ sphere has the same $g(j)$ isomorphism type as element of a model $u$.
    \[\gamma_s(\bar{x}):=\bigland_{1\leq r,t\leq s} \rho_{4\cdot7^j}(x_r,x_t)\land\bigland\limits_{1\leq r\leq s}\psi^j_u(x_r).\]
    We note that for any arbitrary model and tuple $\bar{d}\in D$, $D^\ast\models\gamma_s(\bar{d})$ iff for all $1\leq r,t\leq s\ D^\ast\models\rho_{4\cdot 7^j}(d_r,d_t)$ and $D^\ast\models\psi^j_u(d_r)$ iff $\{\bar{d}\}$ forms a $4\cdot 7^j$-scattered subset of $D$ and for each $d\in \bar{d}\ B(7^j,u)^\ast,u\equiv^s_{g(j)} B(7^j,d)^\ast,d$ iff $\{\bar{d}\}$ form a $4\cdot 7^j$-scattered subset and for each $d\in\bar{d}$ the $7^j\ a$-sphere of $d$ has the same $g(j)$ isomorphism type as $u$. 

    We compare the relative cardinalities, defined as $e$ and $i$, of maximal subsets satisfying this in $M^\ast$ and $B(2\cdot 7^j,\bar{u})^\ast$ respectively. That is, we define $e$ and $i$ such that the following formulas hold:\begin{gather*}
        a.\, B(7^{j+1},\bar{u})^\ast\models\exists x_1...\exists x_e(\bigland\limits_{1\leq r\leq e} \delta_{2\cdot 7^j}(\bar{u},x_r)\land \gamma_e) \\
        b.\, B(7^{j+1},\bar{u})^\ast\not\models\exists x_1...\exists x_{e+1}(\bigland\limits_{1\leq r\leq e+1}\delta_{2\cdot 7^j}(\bar{u},x_r)\land \gamma_{e+1}) \\
        c.\, M^\ast\models\exists x_1...\exists x_i\gamma_i,\qquad d.\, M^\ast\not\models\exists x_1...\exists x_{i+1}\gamma_{i+1}
    \end{gather*} 
    An easy consequence of witnessing establishes these represent the appropriate cardinalities. If no $i$ exists we set $i=\omega$; by contrast $e$ is bounded by the length of $\bar{u}$ (i.e. $k+|\bar{m}|$) since any two elements in the same sphere of radius $2\cdot 7^j$ have a distance at most $4\cdot 7^j$. Additionally, $e\leq i$ and we claim that $e$ and $i$ as determined in $N^\ast$ and $B(7^{j+1},\bar{v})$ are the same. In the case of $N^\ast$ this is because the sentences $\exists x_1...\exists x_i \gamma_i$ are basic $a$-local and so $M$ and $N$ are in mutual agreement to their validity. For $B(7^{j+1},\bar{v})^\ast$ we recall from the forth hypothesis that $(B(7^{j+1},\bar{u})^\ast,\bar{u})\equiv^s_{g(j+1)} (B(7^{j+1},\bar{v})^\ast,\bar{v})$.  Therefore, provided $g(j+1)$ is greater than the quantifier rank of sentences in a. and b. we have equality of $e$. 

    Case $2.1$: $e=i$. Then all elements satisfying $\psi^j_u(x)$ have $a$-distance $\leq 4\cdot 7^j+2\cdot 7^j=6\cdot 7^j< 7^{j+1}$ from $\bar{u}$. In particular, this includes $u$ itself, and then as $u\not\in B(2\cdot 7^{j+1},\bar{u})$ we have \[B(7^{j+1},\bar{u})^\ast\models\exists z(\delta_{6\cdot 7^j}(\bar{u},z)\land \rho_{2\cdot 7^j}(\bar{u},z)\land \psi^j_u(z)\land \psi^j_{\bar{u}}(\bar{u})),\]
    and provided $g(j+1)$ is greater than or equal to the quantifier rank of this sentence, by forth hypothesis \[B(7^{j+1},\bar{v})^\ast\models\exists z(\delta_{6\cdot 7^j}(\bar{u},z)\land \rho_{2\cdot 7^j}(\bar{u},z)\land \psi^j_u(z)\land \psi^j_{\bar{u}}(\bar{u})).\] Then (by witnessing) we have $v\in N$ such that: \begin{gather*}
        2\cdot 7^j<d(\bar{v},v)\leq 6\cdot 7^j\\
        B(7^j,u)^\ast,u\equiv^s_{g(j)} B(7^j,v)^\ast,v\\
        B(7^j,\bar{u})^\ast,\bar{u} \equiv^s_{g(j)} B(7^j,\bar{v})^\ast,\bar{v}
    \end{gather*} 
    Moreover, the underlying sets are disjoint within their respective models and so we immediately have:\[B(7^j,\bar{u}u)^\ast,\bar{u}u\equiv^s_{g(j)} B(7^j,\bar{v}v)^\ast,\bar{v}v\]
    and therefore $\bar{u}u\mapsto\bar{v}v\in I_j$.

    Case $2.2$: $e<i$. Now $N^\ast\models\exists x_1...\exists x_{e+1}\gamma_{e+1}$ so there is an element $v\in N$ such that $B(7^j,\bar{v})\cap B(7^j,v)=\varnothing$ and $N^\ast\models \psi^j_u(v)$, that is $(B(7^j,u)^\ast,u)\equiv^s_{g(j)} (B(7^j,v)^\ast,v)$ and we can proceed as in the previous case.

    It remains to demonstrate that we can define the function $g$. We do so recursively setting $g(0)=0$. Recall that our requirements for our function is that, for each $0\leq j\leq k$, $\bar{u}\mapsto \bar{v}\in I_{j+1}$ and $u\in M$, $g(j+1)$ is greater than or equal to the quantifier rank of the following sentences defined relative to $g(j)$:\begin{align*}
        &\exists z(\delta_{2\cdot 7^j}(\bar{u},z)\land\psi^j_{\bar{u}u}(\bar{u}z))\\
        &\exists x_1...\exists x_e(\bigland\limits_{1\leq r\leq e} \delta_{2\cdot 7^j}(\bar{u},x_r)\land \gamma_e)\\
        &\exists x_1...\exists x_{e+1}(\bigland\limits_{1\leq r\leq e+1}\delta_{2\cdot 7^j}(\bar{u},x_r)\land \gamma_{e+1})\\
        &\exists z(\delta_{6\cdot 7^j}(\bar{u},u)\land \rho_{2\cdot 7^j}(\bar{u},z)\land \psi^j_u(z)\land \psi^j_{\bar{u}}(\bar{u})).
    \end{align*}
    To that end, for each $0\leq 1\leq j$ and \emph{any} given tuple $(\bar{u},\bar{v})\in M^{k-(j+1)}\times N^{k-(j+1)}$ and $u\in M$, let $\#(\bar{u},\bar{v},u)$ be the maximum of the quantifier ranks of the above sentences. Define $\#(\bar{u},\bar{v},v)$ for each $v\in N$ symmetrically. As $M$ and $N$ are finite the set of all such numbers $\{\#(\bar{u},\bar{v},u),\#(\bar{v},\bar{u},v)\}$ is finite and has a maximum and we take $g(j+1)$ to be that maximum.
\end{proof}
\begin{rem}
    It is worth noting that our definitions of locality concepts requires the appeal to the standard expansions in the statement of the theorem. In the next section we will consider an alternative approach where the locality machinery is defined in directly. We will also see that in such contexts the appeal to standard expansions is still necessary (example \ref{exa:Gaiflem}).  
\end{rem}

\section{Locality in Semirings}\label{sec:semirings}
As mentioned earlier, locality in non-classical models has also been investigated by Bizi\`{e}re, Gr\"{a}del and Naaf (2023) who looked at models defined over ordered semirings. The details of the semiring setting, both in syntax and semantics, make a direct comparison somewhat difficult. Additionally, the authors focus directly on the potential existence of strongly equivalent Gaifman normal forms, i.e. the full Gaifman locality theorem, rather than the question of linking the modelling of local formulas and sentences to strong equivalence. Nevertheless, there are some comparisons we can make between our investigation and theirs which help illuminate the nature of locality in the two settings. First, let us briefly introduce ordered semiring semantics. 

\begin{defi}
    A commutative semiring is an algebra $(A,+,\cdot,0,1)$ with $0\not=1$ such that $(A,+,0)$ and $(A,\cdot,1)$ are commutative monoids, $\cdot$ distributes over $+$ and $0$ is an annihilator for $\cdot$. We say a semiring $K$ is \textit{naturally ordered} iff the relation $a\leq b$ iff $\exists c:a+c=b$ is a partial order. Two important subclasses are the lattice semirings where the $(A,\leq)$ is a bounded distributive lattice with $+=\lor$ and $\cdot=\land$ and the \textit{min-max} semirings where the natural order is a total order with $0$ minimum and $1$ maximum. 

    Given a relational predicate language $\CP$ and a naturally ordered semiring $A$ we define a $\CP$-structure $M$ as a triple $(M,\langle P^M\rangle_{P\in\CP},\langle \neg P^M\rangle_{P\in \CP})$ where:\begin{itemize}
        \item $M$ is a non-empty set,
        \item $P^M\colon M^n\rightarrow A$ for each $P\in\CP$
        \item $\neg P^M\colon M^n\rightarrow A$ for each $P\in\CP$ 
    \end{itemize}
    We use the term $\CP$-\textit{literals} to refer to both the usual and negated relation symbols applied to some $\bar{m}\in M$. We say a $\CP$-structure is \textit{model-defining} iff for any pair of complementary literals $P,\neg P$ precisely one of the values $P^M$ and $\neg P^M$ is 0. We say that it \textit{tracks only positive information} iff for all negative literals $\neg P^M\in\{0,1\}$. 
    Given a tuple $\bar{m}\in M$, we extend the interpretations of literals for $\bar{m}$ to a truth valuation $||-||^M$ for all \textit{classical} first-order formulas $\phi(\bar{x})$ written in negation normal form defined by induction:\begin{align*}
        ||L(\bar{m})||^M &= L^M(\bar{m}),& \\
        ||m=m||^M&= 1 & ||m=n||^M&=0 \text{ for }m\not=n \\
        ||\psi(\bar{m})\lor\lambda(\bar{m})||^M&=||\psi(\bar{m})||^M + ||\lambda(\bar{m})||^M & ||\psi(\bar{m})\land\lambda(\bar{m})||^M&=||\psi(\bar{m})||^M\cdot||\lambda(\bar{m})||^M\\
        ||\exists x\phi(\bar{m},x)||^M&=\sum\limits_{m\in M}||\phi(\bar{m},m)||^M & ||\forall x\phi(\bar{m},x)||^M&=\prod\limits_{m\in M}||\phi(\bar{m},m)||^M.
    \end{align*}
\end{defi}
\begin{rem}
    We can point out a number of differences in both syntax and semantics. On the semantic side, in the semiring setting the focus is more on understanding generalised versions of conjunction and disjunction with the existential and universal quantifiers understood as arbitrary indexed versions of these. This contrasts in our setting where we always take the quantifiers as infimum and supremum. In principle this does not require a conjunction or disjunction connective at all, although of course this is an essential part of the residuated lattice setting. The two understandings naturally align when working with lattice semirings, although it is important to recognise that the it is the lattice connectives $\land,\lor$ of a residuated lattice that form a lattice semiring, rather than the residuated conjunction $\cdot$.
    
    On the syntax side, the semiring setting only considers alternative interpretations of familiar classical formulas. By restricting to formulas written in negation normal forms the semantics can be set up without a true negation connective, which is instead applied only at the base atomic level. Formulas are then built on these literals rather than atomics.\footnote{We see an echo of this approach in our definition of isomorphism types. Much as negation is only applied to atomics and literals are taken as the union of these formulas we only apply the expanded truth constants to the base formulas $a\und R(\bar{x})$ and $R(\bar{x})\und a$, with formulas of the first kind `positive' and the second `negative'.} The relationship between positive and negative literals is then restricted by the requirements to be model defining and to track positive information. Together these restrictions mean that for any relation symbol $P$ and tuple $\bar{m}\in M$ \[ ||\neg P(\bar{m})||^M=\begin{cases}
        1 & \text{if }||P(\bar{m})||^M=0\\
        0 & \text{if }||P(\bar{m})||^M>0.
    \end{cases}\] 
    In the setting of residuated lattices negation is interpreted by the term  $\neg(x):=x\und 0$. Guaranteeing that all models defined over a residuated lattice are model defining and track positive information under this negation amounts to requiring that the lattice $A$ has no-zero divisors, i.e. $\forall a,b\in A\ a\cdot b=0$ implies $a=0$ or $b=0$. 

    Pulling this together, the two investigations overlap for $\CP$-models defined over residuated chains with no zero divisors, equivalently model-defining $\CP$-structures that track only positive information defined over min-max semirings, and where we are only concerned with the negation normal form fragment of the set of formulas.
\end{rem}

Even when we restrict attention to the overlap in subject matter, comparing the investigations requires care due to their definition of the concepts needed to discuss locality. The notion of Gaifman graph is essentially the same; two elements in a model are adjacent iff there is some relation symbol $R$ whose interpretation sends some tuple containing the elements to a value strictly greater than $0$ \cite[Definition 2.4]{BiziereGradelNaaf23:Localitysemiring}. This means distance and local neighbourhood are understood as we did in our positive recovery of Hanf locality and the corollary result by Fagin, Stockmeyer and Vardi - corollary \ref{cor:Hanf}. Indeed, Bizi\`{e}re, Gr\"{a}del and Naaf also prove a version of this locality result \cite[Theorem 3.6]{BiziereGradelNaaf23:Localitysemiring} which agrees with our own result at the point of overlap and is proved in essentially the same fashion. It is notable that both their result and corollary \ref{cor:Hanf} each apply in their own way to models with a more flexible semantics and richer syntax than that required to make a sensible comparison. In the ordered semiring case the result applies to models defined for all fully idempotent semirings, meaning it also applies to models where the quantifiers are not interpreted as supremum and infinimum respectively. Our result was for all well-connected bounded residuated lattices, meaning it applies even for the richer syntax including the residuated operators $\cdot,\und$ and $/$, and in particular does not require the nice behaviour of negation that is built into models defined over ordered semirings.

When it comes to Gaifman locality there is a sharper departure. This is driven by the more ambitious focus of Bizi\`{e}re, Gr\"{a}del and Naaf's investigation where they look to investigate the status of the full Gaifman theorem - that is whether every formula in their language is strongly equivalent to some Gaifman formula built only from local formula and basic local sentences. One consequence of this is that they require the associated formula $\phi^{r,\bar{x}}$ take on the \textit{exact} value of the original formula rather than simply matching in terms of the $\models$ relation \cite[Appendix A]{BiziereGradelNaaf23:Localitysemiring}. This leads them to directly introduce new \textit{ball quantifiers} of the form $Qy\in B^{\CP}_{r}(m)$ with the semantics \begin{align*}
    ||\exists y\in B^{\CP}_{r}(m)\phi(m,y)||^M&:=\sum\limits_{n\in B(r,m)}||\phi(m,n)||^M\\ 
    ||\forall y\in B^{\CP}_{r}(m)\phi(m,y)||^M&:=\prod\limits_{n\in B(r,m)}||\phi(m,n)||^M
\end{align*}
Similarly to our considerations of needing both positive and negative distance formulas they also directly introduce \textit{scattered quantifiers} which directly quantify over only elements sufficiently far away from the specified element. The different approaches of the two investigations highlight an important aspect of the Gaifman lemma; it links the $\models$ relation in standard expansions to equality of the $||-||^M$ function. This link is provided by the residuated connectives $\und,/$ which interpret the order. Without that connective we lose the ability to prove the same Gaifman lemma when restricting to the negation normal fragment. One could ask about the possibility of a weaker Gaifman lemma where we only ask for equivalence at the level of $\models$, i.e. that $M,\bar{m}\equiv_k N,\bar{m}$. This does in fact hold, following directly from the classical Gaifman lemma.

\begin{cor}
    Let $\CP$ be a finite relational language with object constants and let $M,N$ be finite $\CP$-models over a min-max semiring $A$. We expand our language with ball and scattered quantifiers and define local formula and basic local sentences accordingly. Suppose that $M$ and $N$ satisfy the same basic local sentences. Then $M\equiv N$. 
\end{cor}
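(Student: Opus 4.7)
The plan is to reduce this statement to the classical Gaifman lemma via a threshold-cut construction. Since $A$ is a min-max semiring, it is a chain with $+$ acting as supremum and $\cdot$ as infimum, and since $M,N$ are finite, each interpretation $R^M(\bar m)$, $R^N(\bar n)$ lies in a finite subset of $A$. I would fix a positive threshold $t \in A$ corresponding to the semiring modelling relation (so that $M \models \phi$ means $||\phi||^M \geq t$), chosen small enough that no value of any literal in $M$ or $N$ lies strictly between $0$ and $t$; this ensures the semiring Gaifman graph of $M$ coincides with the classical Gaifman graph of its cut, defined next.

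Next I would define classical $\CP$-structures $M^c, N^c$ on the same domains as $M,N$ by declaring a positive literal $R(\bar m)$ true in $M^c$ iff $R^M(\bar m) \geq t$, and, treating each $\neg R$ as a fresh classical predicate, declaring $\neg R(\bar m)$ true in $M^c$ iff $(\neg R)^M(\bar m) \geq t$. A routine induction on $\phi$ in negation normal form then establishes the key bridge: $M \models \phi$ iff $M^c \models_{\mathrm{cl}} \phi$. The $\lor,\land$ cases use $\max(a,b) \geq t$ iff $a \geq t$ or $b \geq t$ and $\min(a,b) \geq t$ iff both $a,b \geq t$; the ordinary $\exists,\forall$ cases use witness attainment in the finite-chain setting; and the ball and scattered quantifier cases use the same attainment restricted to local neighbourhoods, together with the fact that the Gaifman graphs match so that $B^{\CP}_r(m)$ agrees in $M$ and $M^c$.

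Applying the hypothesis, $M$ and $N$ agree on all semiring basic local sentences, so by the bridge above $M^c$ and $N^c$ agree on all classical basic local sentences. The classical Gaifman lemma, quoted just before the statement, then yields $M^c \equiv N^c$ as classical structures, and reversing the bridge once more lifts this to $M \equiv N$ as semiring structures.

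The one delicate point I anticipate is the alignment between the threshold $t$ (which governs $\models$) and the strictly-positive criterion used to define semiring adjacency in the Gaifman graph. Finiteness of the models makes this alignment automatic by choosing $t$ below the least positive literal value, but verifying that this choice is consistent with the semantics of the ball and scattered quantifiers is the only non-routine bookkeeping step. Everything else flows from the fundamental observation that the semiring operations $\max$ and $\min$ in a chain preserve any order-cut, so that classical logic is essentially recovered level by level in the min-max setting.
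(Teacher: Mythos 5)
Your overall route is the paper's: cut the semiring model down to a classical counterpart, prove by induction on negation normal formulas (including the ball and scattered quantifiers) that modelling is preserved, invoke the classical Gaifman lemma, and pull the conclusion back. One quirk is harmless but worth flagging: the threshold is not a parameter you get to choose after seeing $M$ and $N$ --- in this setting $M\models\phi$ simply means $||\phi||^M>0$. Since the models are finite and every formula value in a min-max semiring is either $0$, $1$, or a literal value, your cut at $\geq t$ (with $t$ below the least positive literal value) happens to coincide with the paper's cut at $>0$, so this step survives, but it should be presented as that observation rather than as a choice of semantics.

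The genuine gap is your treatment of negated literals as fresh classical predicates. The structures in the relevant overlap are model-defining and track only positive information, so $(\neg R)^M(\bar{m})=1$ exactly when $R^M(\bar{m})=0$. If each $\neg R$ is added to the classical signature as a new relation interpreted by its cut, then the Gaifman graph of $M^c$ --- which the classical Gaifman lemma computes from \emph{all} relations of the signature of the structure it is applied to --- acquires an edge for every tuple on which some $R$ takes value $0$, and is typically close to complete. Your key claim that ``the Gaifman graphs match'' then fails: the classical balls no longer agree with the semiring balls $B^{\CP}_r(m)$, the classical basic local sentences over the expanded signature are not images under your bridge of semiring basic local sentences, and so their agreement does not follow from the hypothesis; the appeal to the classical lemma is not licensed. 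The repair is exactly the paper's move: keep the signature $\CP$, set $R^{M^c}:=\{\bar{m}:R^M(\bar{m})>0\}$, and use model-definingness to verify the bridge for negated literals, namely $M\models\neg R(\bar{m})$ iff $(\neg R)^M(\bar{m})>0$ iff $R^M(\bar{m})=0$ iff $M^c\models\neg R(\bar{m})$ with ordinary classical negation. With that change the Gaifman graphs genuinely coincide and the rest of your argument goes through as written.
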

\begin{proof}
    For each semiring $\CP$-model $M$ we define a classical $\CP$-model counterpart $M^c$ to have the same underlying set and where for each relation symbol $R\in\CP$ we set:\[R^{M^c}:=\{\bar{m}\in M:R^M(\bar{m})>0\}.\]
    By a straightforward induction one can check for any negation normal formula $\phi(\bar{x})$, possibly including the additional ball and scattered quantifiers, that \[M\models\phi(\bar{m})\text{ iff }M^c\models\phi(\bar{m}).\]
    In particular, $M$ and $N$ satisfying the same basic local sentences implies $M^c$ and $N^c$ satisfy the same basic local sentences. We note that the semantics of the added quantifiers is equivalent to the standard notion \cite[Definition 4.1]{BiziereGradelNaaf23:Localitysemiring},  therefore by the classical Gaifman lemma $M^c$ and $N^c$ are logically equivalent and so $M$ and $N$ are as well. 
\end{proof}

That equivalence in modelling for the negation normal fragments with the `flattened' classical model can also be used to show that our lemma with the strong equivalence conclusions can fail in the setting of min-max semirings.
\begin{exa}\label{exa:Gaiflem}
    Consider a predicate language with a single monadic predicate $P$ and the min-max semiring defined on the unit interval $[0,1]$. We define the models $M$ and $N$ whose domain is a single element $s$ with\begin{align*}
        P^M(s)=\frac{3}{4} && P^N(s)=\frac{1}{2}.
    \end{align*}
    On the one hand $||\exists xP(x)||^M=\frac{3}{4}$ and $||\exists xP(x)||^N=\frac{1}{2}$ so the models are not strongly equivalent. However, the classical models $M^{\ast^c}$ and $N^{\ast^c}$, that is the classical counterparts to the standard expansions in the language $\CP^\ast$ are identical. Therefore $M^{\ast^c}$ and $N^{\ast^c}$ certainly satisfy the same basic local sentences and so by the same induction as in our previous corollary $M^\ast$ and $N^\ast$ satisfy the same basic local sentences. 

    We can expand our algebraic language to include residuated connectives. Recall from example \ref{exa:wellconnectboundresi} the standard G\"{o}del chain on the unit interval $[0,1]_G$ where we interpret the (commutative) residuaed conjunction $\cdot$ and its residuum $\rightarrow$ by: \begin{align*}
        a\cdot b=min\{a,b\} && a\rightarrow b=\begin{cases}
            1 & \text{if }a\leq b,\\
            b & \text{otherwise}.
        \end{cases}
    \end{align*}
    Defining the models $M$ and $N$ as above, we can show that the appeal to the behaviour of standard expansions is necessary in the Gaifman lemma, even with a coherent notion of local formulas and basic local sentences provided by the defined ball and scattered quantifiers. Just as before the two models $M$ and $N$ are not strongly equivalent. Now, in \cite[Lemma 11, Example 12]{DellundeGarciaNoguera18:Fuzzybackandforth} we find an example of two models similarly defined to ours which are shown to be logically equivalent. Following the same reasoning (with some straightforward adjustments) one can check that our models are logically equivalent even for the semantics with the added ball and scattered quantifiers. In particular, they must satisfy the same basic local sentences.  
\end{exa}

Returning to Bizi\`{e}re, Gr\"{a}del and Naaf's own work, their main result is a recovery of the full Gaifman theorem for negation normal sentences for finite models defined over min-max ordered semirings.
\begin{thmC}[\cite{BiziereGradelNaaf23:Localitysemiring}]\label{thm:Gaifsemiring}
\emph{(Gaifman normal forms in min-max semirings)} Let $\CP$ be a finite relational signature. For every negation normal sentence $\psi$ there is a local sentence $\lambda$ such that for every min-max semiring $A$ and for each finite $\CP$-model $M$ over $A\ ||\psi||^M=||\lambda||^M$.   
\end{thmC}

It is notable that we see a restriction to min-max semirings, i.e.\ semirings whose underlying order is linear just as we saw a restriction to residuated chains for our Gaifman lemma. Whilst there is no analogous condition to the presence of a co-atom, semiring semantics builds a similar kind of restriction to negation with the idea of model-defining, positive information tracking structures. Despite this, as already discussed one must be careful when making comparisons. For example, the $n$-element MV-chains $\Luk_n$ introduced in example \ref{exa:resichaincoatom} are residuated chains with co-atoms and so fall under the purview of our Gaifman lemma. Both their $[\lor,\land]$ and $[\lor,\cdot]$-reducts are ordered semirings, but only the former are min-max semirings to which the preceding result applies.

The proof strategy to theorem \ref{thm:Gaifsemiring} is similar to Gaifman's original proof - a constructive elimination of alternative quantifiers - although with fairly significant adjustments to account for the restriction to sentences and the particular behaviour of negation in the ordered semiring setting. It is highly reliant on the restriction to negation normal formulas, specifically that beyond the base level of literals the only connectives are $\land$ and $\lor$ which are both monotone with respect to the order. This is not true of the residuated connectives $\und$ and $/$ posing a significant barrier to adapting the quantifier elimination proof strategy to the setting of the residuated chains with co-atoms for which we recovered our lemma. Whether one can indeed use this proof strategy to recover a full Gaifman theorem remains an open question deserving of further research.

\section{Queries}\label{sec:queries}
After our proof of the initial Hanf's theorem we remarked that our ability to immediately translate the classical proof to the many-valued setting indicated that our interest should be in these theorems antecedent. Our Gaifman result, where we generalised not the usual Gaifman's theorem but its main lemma, somewhat reinforces this; in both cases what we obtained is a sufficiency condition for models being strongly elementary equivalent. Our interest then turns to what it means for model pairs to satisfy the condition and how that might look different in a many-valued context. Accordingly, we turn to a topic that both utilises these theorems and demonstrates a few standard model comparisons that do meet these conditions - model queries. 

\begin{defi}[Model Query]
    An $n$-ary query for $n\geq 0$ on $\CP$-models is a mapping $Q$ that associates a $\CP$-model $(A,M)$ with a subset of $M^n$ closed under isomorphism, i.e. if $(A,M)\isom (B,N)$ via $(f,g)\colon (A,M)\rightarrow (B,N)$ then $Q(N)=f[Q(M)]$.

    We say that $Q$ is definable (in a logic of type $\CL$) iff there is an $\CP$-formula $\phi(\bar{x})$ such that for every $(A,M)$: 
    \[Q(A,M)=\{\bar{m}\in M^n:(A,M)\models\phi(\bar{m})\}.\]
\end{defi}

\begin{rem}
    If we fix our algebra $A$ we can consider queries definable in the language $\CP^\ast$ defined relative to standard models. That is queries $Q$ for which there is a $\CP^\ast$ formula $\phi(\bar{x})$ such that for every $\CP$-model $M$: \[Q(M)=\{\bar{m}\in M^n:M^\ast\models\phi(\bar{m})\}.\]
    Moreover, for each $a\in A$ we can consider the notion of $a$-definable queries where $Q$ is $a$-definable iff there is an $\CP^\ast$-formula $\phi(\bar{x})$ such that for every $\CP$-model $M$: \[Q(M):=\{\bar{m}\in M^n:||\phi(\bar{m})||^{M}\geq a\}.\]
    In the context of standard models these definitions are equivalent in the sense that a query is definable iff for all $a\in A$ it is $a$-definable iff it is $a$-definable for some $a\in A$. This follows from the syntactic translations $\phi\cdot a$ and $a\und \phi$ respectively.

    We could further consider the idea of strict $a$-definable queries where $Q$ is  strict $a$-definable iff there is an $\CP^\ast$-formula $\phi(\bar{x})$ such that for every $\CP$-model $M$:\[Q(M)\coloneqq \{\bar{m}\in M^n:||\phi(\bar{m})||^M> a\}.\]
    When $A$ is a residuated chain with co-atom, as is the setting for our Gaifman lemma, the presence of the co-atom lets us use the easy translation of $(\phi\und a)\und c$ where $c$ is the constant symbol corresponding to the co-atom in the algebra. In the setting of our Hanf locality result, bounded well-connected residuated lattices, we do not have the same easy syntactic translations to give an equivalence with strict $a$-definable queries and definable queries.
\end{rem}

Given our primary interest in models defined over a fixed algebra we will proceed by looking at queries definable relative to standard models. When $n=0$ we use $1$ and $0$ in place of $M^0$ and $\varnothing$ for the possible outputs of a query. We can formulate two locality criteria for queries that align with our two locality theorems. Recall the relation $M\leftrightarrows_r N$ stating that for each isomorphism type $\tau$ of $\CP$-models the number of elements of $M$ and $N$ with $r$-sphere type $\tau$ is the same, defined relative to any (strict) threshold distance metric. 
\begin{defi}
    Let $d(x,y)$ be a distance metric for $\CP$-models. An $n$-ary query $Q$ on $\CP$-models is $d$-Hanf local iff there is a number $r\geq 0$ such that for every pair of $\CP$-models $M,N\ \bar{m}\in M^n$ and $\bar{n}\in N^n$: \[(M,\bar{m})\leftrightarrows_{r} (N,\bar{n}) \text{ implies }\bar{m}\in Q(M)\text{ iff }\bar{n}\in Q(N).\]
    The smallest $r$ for which the above condition holds is called the Hanf locality rank of $Q$ and are denoted by $hr(Q)$.
\end{defi}

\begin{defi}
    Let $d(x,y)$ be a distance metric for $\CP$-models. An $n$-ary query $Q$, $n>0$ on $\CP$-models is called $d$-Gaifman local iff there exists a number $r\geq 0$ such that for every $\CP$-model $M$ and $\bar{m_1},\bar{m_2}\in M^n$: \[B(r,\bar{m_1})\isom B(r,\bar{m_2})\Rightarrow \bar{m_1}\in Q(M)\text{ iff }\bar{m_2}\in Q(M).\]
\end{defi}
Our two forms of locality each provide a method for proving the inexpressibility of a query. We can establish a corollary to each locality theorem that definable queries are Hanf/Gaifman local respectively. Thus, to show a query is not definable it is sufficient to check it is not Hanf/Gaifman local. Of course this is only applicable for the settings where we recovered the respective locality result. Note a key difference between these two is that Hanf local queries are concerned with different structures whilst Gaifman local ones are concerned with a single structure. For this reason Hanf locality tends to be most useful for $0$-ary queries and Gaifman locality when $n>0$. 

\begin{cor}
    Let $Q$ be an $n$-ary query on $\CP$-models defined over a residuated chain $A$ with co-atom, and $d(x,y)$ a (strict) threshold distance metric. Every definable query is $d$-Gaifman local.
\end{cor}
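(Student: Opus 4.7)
The plan is to reduce Gaifman locality of a definable query directly to the Gaifman Lemma (Theorem~\ref{thm:GaifLem}) applied within a single model. Suppose $Q$ is definable, so there is a $\CP^\ast$-formula $\phi(\bar{x})$ of quantifier depth $k$ such that for every $\CP$-model $M$ we have $Q(M) = \{\bar{m}\in M^n : M^\ast \models \phi(\bar{m})\}$. I will show that $7^k$ suffices as a Gaifman locality rank for $Q$.

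Fix a finite $\CP$-model $M$ and tuples $\bar{m_1},\bar{m_2}\in M^n$ with $(B(7^k,\bar{m_1}),\bar{m_1}) \isom (B(7^k,\bar{m_2}),\bar{m_2})$ as pointed substructures. The aim is to instantiate Theorem~\ref{thm:GaifLem} with both models taken to be $M$ and the two tuples $\bar{m_1}, \bar{m_2}$ playing the roles of $\bar{m}$ and $\bar{n}$. Hypothesis $(ii)$ of the lemma is immediate: $M^\ast$ certainly satisfies the same basic local sentences as itself. For hypothesis $(i)$, take any local formula $\psi^{7^k,\bar{x}}(\bar{x})$. Lemma~\ref{lem:localform} tells us that for $i=1,2$,
\[
M^\ast \models \psi^{7^k,\bar{x}}(\bar{m_i}) \text{ iff } B(7^k,\bar{m_i})^\ast \models \psi(\bar{m_i}).
\]
The given pointed isomorphism between the strong substructures $B(7^k,\bar{m_1})$ and $B(7^k,\bar{m_2})$ lifts to an isomorphism of their standard expansions (the fresh nullary relation symbols $a\in A$ are interpreted uniformly), so the right-hand sides for $i=1$ and $i=2$ agree. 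This verifies $(i)$.

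From Theorem~\ref{thm:GaifLem} I then obtain $M,\bar{m_1} \equiv^s_k M,\bar{m_2}$, and Corollary~\ref{cor:standardEF} promotes this to $M^\ast,\bar{m_1} \equiv^s_k M^\ast,\bar{m_2}$. Since $qd(\phi)\leq k$, this yields $\|\phi(\bar{m_1})\|^{M^\ast} = \|\phi(\bar{m_2})\|^{M^\ast}$, and in particular $M^\ast\models\phi(\bar{m_1})$ iff $M^\ast\models\phi(\bar{m_2})$, i.e.\ $\bar{m_1}\in Q(M)$ iff $\bar{m_2}\in Q(M)$, as required.

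The only real subtlety — and the one I would pay most attention to when writing the full proof — is the transfer of the pointed isomorphism of $\CP$-reducts to the standard expansions needed for hypothesis $(i)$, together with the justification that Lemma~\ref{lem:localform} applies to the $\CP^\ast$-formula $\phi$ defining $Q$ (it does, as that lemma is explicitly stated for $\CP^\ast$-formulas). Everything else is an immediate bookkeeping application of results already established earlier in the paper, so no new conceptual work is required.
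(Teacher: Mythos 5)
Your proposal is correct and follows essentially the same route as the paper: instantiate the Gaifman Lemma with both models equal to $M$, note that $M^\ast$ trivially satisfies the same basic local sentences as itself, and verify the local-formula hypothesis via Lemma~\ref{lem:localform} together with the (expanded) isomorphism of the two neighbourhoods, concluding $\bar{m_1}\in Q(M)$ iff $\bar{m_2}\in Q(M)$. The only divergence is the choice of locality rank ($7^k$ versus the paper's $g(k)$), a bookkeeping detail on which your choice actually matches the radius appearing in the lemma's hypothesis more directly.
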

\begin{proof}
    Let $Q$ be defined by the $\CP^\ast$-formula $\phi(\bar{x})$. Let $k=qr(\phi)$ and $d=g(k)$ where $g$ is the function defined in the proof of the residuated chain Gaifman lemma. Suppose $B(d,\bar{m})\isom B(d,\bar{n})$, we note this extends to an isomorphism $B(d,\bar{m})^\ast\isom B(d,\bar{n})^\ast$. $M^\ast$ trivially satisfies the same basic local sentences as itself. Let $\phi^{k,\bar{x}}(\bar{x})$ be the local $\CP^\ast$-formula induced by $\phi$. Then $M^\ast\models\phi^{k,\bar{x}}(\bar{m})$ iff $B(k,\bar{m})^\ast\models\phi(\bar{m})$ iff $B(k,\bar{n})^\ast\models\phi(\bar{n})$ iff $M\models\phi^{k,\bar{x}}(\bar{n})$. Therefore by the Many-valued Gaifman's lemma $(M,\bar{m})\equiv^k_s (M,\bar{n})$ and: \[\bar{m}\in Q(M) \text{ iff }M^\ast\models\phi(\bar{m}) \text{ iff }M^\ast\models\phi(\bar{n}) \text{ iff }\bar{n}\in Q(M).\qedhere\]
\end{proof}

\begin{cor}
    Let $Q$ be an $n$-ary query on $\CP$-models defined over a well-connected bounded residuated lattice $A$. Let $d_{>\bot}$ be the strict $\bot$ threshold metric. Then:\begin{itemize}
        \item Every definable query is $d_{>\bot}$-Hanf local.
        \item For all $a\in A$, every $a$-definable query is $d_{>\bot}$-Hanf local.
        \item For all $a\in A$, every strict $a$-definable query is $d_{>\bot}$-Hanf local.
    \end{itemize}
\end{cor}
\begin{proof}
    Let $Q$ be a definable query by $\phi(\bar{x})$. Letting $k=qr(\phi)$ we consider $d=3^k$ and suppose $(M,\bar{m})\leftrightarrows_d (N,\bar{n})$. From corollary \ref{cor:Hanf} we have $(M,\bar{m})\equiv^k_s(N,\bar{n})$ and so $\bar{m}\in Q(M)$ iff $M\models\phi(\bar{m})$ iff $N\models\phi(\bar{n})$ iff $\bar{n}\in Q(N)$. The other claims follow similarly simply noting that $(M,\bar{m})\equiv^k_s (N,\bar{n})$ also implies $||\phi(\bar{m})||^M\geq a$ iff $||\phi(\bar{n})||^N\geq a$ and $||\phi(\bar{m})||^M> a$ iff $||\phi(\bar{n})||^N>a$.
\end{proof}
\begin{rem}
    We can easily give a definition of local queries that utilises the full power of Hanf locality. We define a query $Q$ to be \textit{threshold local} iff $\exists k\geq 0$ such that for all $\CP$-models $M,N$ and $\bar{m}\in M$, $\bar{n}\in N$ if for each $r$-sphere type $\iota$ where $r\leq k$, $M,\bar{m}$ and $N,\bar{n}$ have either the same number of points of $r$-sphere type $\iota$ or at least $k\cdot e$ elements of $r$-sphere type $\iota$ (where $e$ is the maximum size of $k$-spheres in $M$ and $N$), then $\bar{m}\in Q(M)$ iff $\bar{n}\in Q(N)$.
\end{rem}

\subsection{Examples of locality}
As a small flavour of the method in example, we present a many-valued variant of the two standard queries for which this technique is demonstrated in the classical case \cite[Chapter 2]{Gradeletal07:Finitemodelapplication}. We work in the theory of undirected weighted graphs, that is our signature is a single binary relation symbol $E$ 

\begin{exa}
    We say that a weighted graph defined over an algebra $A$ is $t$-connected iff for any pair of vertices $x,y,$ there is a path in the graph from $x$ to $y$ where every edge has weight $>t$. The $\bot$-connectivity query $Q$ is the $0$-ary query for weighted graphs defined over a well-connected bounded lattice $A$ defined by: \[Q(M)=\begin{cases}
        1 & \text{ if  M is $\bot$-connected}\\
        0 & \text{ o.w.}
    \end{cases}\] 
    This query is not definable for finite weighted graphs. 

    We consider $M$ to be a cycle of length $2m$ where each edge has weight $>\bot$ and all the remaining edges have weight $\bot$. We let $N$ be two $m$ cycles where each edge has weight $>\bot$ and all remaining edges weight $\bot$ where $m>2r+1$ for arbitrary $r\in\omega$.

    Because each cycle has length $>2r+1$ for any element of $M\cup N\ B_{>\bot}(r,x)$ is always a weighted chain of length $2r+1$ where each edge has weight $>\bot$ and all others are weight $\bot$. Therefore every element of $M\cup N$ realises the same $r$-sphere type and so $M\leftrightarrows_r N$ but $Q$ does not agree on $M$ and $N$ so $Q$ is not $d_{>\bot}$-Hanf local.
\end{exa}

\begin{exa}
    Given two elements $m,n$ belonging to a weighted graph defined over an algebra $A$, we say that $n$ is in the $t$-transitive closure of $m$ iff there is a path from $m$ to $n$ where each edge has weight $\geq t$. The $t$-transitive closure query is the $2$-ary query for weighted graphs defined over a residuated chain $A$ with co-atom defined by: \[Q(M)=\{(m,n)\in M^2:n\text{ is in the }t\text{-transitive closure of }m\}.\]
    This query is not definable for finite weighted graphs.
    Let $r\in\omega$ and consider $M$ to be a \textit{directed} $t$-chain of length $4r+4$, that is the weighted graph on $4r+4$ elements with a chain of edges all of weight at least $t$ and all other edges of weight $<t$. Let $m$ be the element in position $r+1$, and $n$ the element in the $3r+3$ position. Then $B_t(r,(m,n))\isom B_t(r,(n,m))$ as each is just the disjoint union of $2$ directed $t$-chains of length $2r+1$ centred on $m$ and $n$ respectively. However, $(m,n)\in Q(M)$ and $(n,m)\not\in Q(M)$ so $Q$ is not $d_{t}$-Gaifman local.
\end{exa}


\section{Conclusions and Further Study}\label{Conclusion}
Here we have begun the investigation on locality for finite residuated lattice models. Following the usual pattern for work in non-classical model theory we aimed to define appropriate generalisations of classical notions and ask whether the classical results still hold. Starting with Hanf locality, we found that for the most natural definition of distance in a model the theorem failed in all algebras beyond the 2-element Boolean algebra. Nevertheless, there was one definition of distance for which the theorem was recovered for models defined over any well-connected bounded residuated lattice. When considering Gaifman locality we had to restrict to the much better behaved class of residuated chains with co-atom in order to even define the concept of local formula and basic local sentence it is concerned with. In contrast to Hanf locality however, within that setting any of the considered distance metrics were applicable meaning the result could be recovered even for models defined over unbounded algebras. A significant tool throughout were the $k$-isomorphism types, a syntactic encoding of back-and-forth systems made possible by our residuated connectives $\und,/$.

There remain many open questions. Perhaps the most pressing would be to better understand to what extent the restrictions we required to recover the theorems are necessary and to seek potential counterexamples to the Gaifman lemma for models defined over residuated lattices that are not well-connected or unwitnessed models generally. Doing so faces significant challenges, that all models considered are witnessed is woven into the behaviour of every use of syntax to encode some semantic property that relies on quantification. Perhaps more significantly, in the case of relativised quantifiers we have an encoding dependent on witnessing that is built into the definition of the local formula and basic local sentences. As it stands the very statement of Gaifman's lemma is only coherent for witnessed models. During our comparison to locality for semiring semantics we highlighted the question of whether the full Gaifman theorem might be recovered by following the proof strategy based on quantifier elimination utilised there. The presence of residuated connectives poses a serious barrier to adapting the proof methods used in \cite{BiziereGradelNaaf23:Localitysemiring}, as unlike the min-max operations they are not monotone in both arguments. Accordingly, a more modest possibility would be to consider a syntax restriction and focus only on so-called `positive' sentences defined with similar inspiration as they are in classical model theory. 

Notably, in this investigation we focused solely on finite models. Whilst a natural setting for a locality investigation given the relative importance of locality results in finite model theory \cite{EbbinghausFlum95:Finmodeltheory,Libkin04:Elementsfinmod}, locality concepts can certainly be sensibly applied to the potentially infinite case. Indeed, both the classical Hanf \cite{Hanf65:Modeltheory} and Gaifman \cite{Gaifman81:Local} results were proved for arbitrary models, the natural question then is whether a similar investigation could be carried out for arbitrary models and how critical our restriction to finite models is in the preceding work. An immediate cause for concern is that for arbitrary models the $k$-isomorphism types are no longer well-defined, requiring an infinite conjunction/disjunction. Classically, this is no issue as the set of formulas in a bounded number of variables is finite up to logical equivalence, but an equivalent result has not been established in the generalised setting. This question is additionally interesting as it links to understanding the situations in which the reverse direction of the Ehrenfeucht-Fra\"{i}ss\'{e} theorem can be recovered, something not guaranteed in the residuated lattice setting \cite[Example 23]{DellundeGarciaNoguera18:Fuzzybackandforth}. Alternatively, one could look to quantifier elimination methods as used by both Gaifman in his original classical proof and Bizi\`{e}re, Gr\"{a}del and Naaf for semirings.

\section*{Acknowledgment}
We were partially supported by the Australian Research Council grant DE220100544.

\bibliographystyle{alphaurl}
\bibliography{ref}

\end{document}